\documentclass[11pt,a4paper]{article}

\usepackage[margin=1in]{geometry}
\usepackage{amsmath}
\usepackage{amssymb}
\usepackage{amsthm}
\usepackage{array}
\usepackage{booktabs}
\newcolumntype{L}[1]{>{\raggedright\arraybackslash}p{#1}}
\usepackage{flafter}
\usepackage{placeins}
\usepackage[numbers]{natbib}
\usepackage[hidelinks,unicode]{hyperref}
\newcommand{\SourceDigest}{6696cc2cfe4f8505a61fdcb43bfe035338022a0c2b37842c2370a39c8df310f9}
{unrecorded}}
\hypersetup{
  pdflang={en-US},
  pdftitle={Certified Splitting: Decidability and Anchor Supply for Maximal-Munch Tokenization},
  pdfauthor={Nicklas Nidhögg},
  pdfsubject={Maximal munch tokenization; parallel scanning; certified anchors; decidability},
  pdfkeywords={tokenization, maximal munch, parallel scanning, certification, segmentation, source-sha256:\SourceDigest}
}
\usepackage{etoolbox}
\apptocmd{\thebibliography}{\widowpenalty=10000 \clubpenalty=10000}{}{}

\theoremstyle{definition}
\newtheorem{definition}{Definition}

\theoremstyle{plain}
\newtheorem{theorem}{Theorem}
\newtheorem{lemma}{Lemma}
\newtheorem{corollary}{Corollary}
\newtheorem{proposition}{Proposition}

\newcommand{\code}[1]{\texttt{#1}}
\newcommand{\zenodo}[1]{{\urlstyle{rm}\href{https://doi.org/#1}{\nolinkurl{#1}}}}

\title{Certified Splitting:\\Decidability and Anchor Supply for\\Maximal-Munch Tokenization}
\author{Nicklas Nidh\"ogg\\[0.4em] \normalsize Independent Researcher\\ \normalsize
\texttt{nicklas.nidhogg@gmail.com}\\ \normalsize ORCID: 0009-0006-6161-6150}
\date{October 4, 2026}

\begin{document}
\maketitle

\begin{abstract}
A certificate is a window of bytes with an origin inside it such that, wherever the window occurs in a completely
tokenizable input, the token covering the window's final byte begins at that origin. Certification is decided from the
token set alone, and a certificate's occurrences mark where work may soundly begin. This paper decides certification and
applies it to cuts, edits and audits. It is decided two ways, online with a completeness cutoff for literal vocabularies
and offline through an armed-run verifier for arbitrary regular token sets. Either route, under a declared window
budget, yields the anchor inventory, with a counterexample on each refutation. Cutting a completely tokenizable input at
certified anchors and scanning the chunks independently reproduces the sequential segmentation, with no speculation or
fixup pass. A tokenizability-preserving byte substitution moves boundaries only strictly between the certified anchors
witnessed in the unchanged bytes on either side of it. For tokens of length at most $L$, an edit at $p$ moves no
boundary at or below $p - L + 1$. Cutting at every occurrence of a delimiter is sound exactly when it occurs only at
token starts, and cutting after every occurrence exactly when it occurs only at token ends. Two instruments follow. The
anchor-free span of a finite certified inventory is decided bounded or not with its exact supremum, three for the UTF-8
shape; and a differential auditor decides whether two token sets place different boundaries on an input both tokenize,
with a differing input. Every theorem carries an executable check. Every measurement is a named program's output or
arithmetic on such outputs, and each program ships with the paper or is pinned by commit in the munch library. That
library's threaded scanner splits the archived campaign corpus at certified anchors and reproduces the sequential
boundary stream byte for byte.
\end{abstract}

\section{Introduction}\label{sec:intro}

A certificate is a window of bytes with an origin inside it such that, wherever the window occurs in a completely
tokenizable input, the token covering the window's final byte begins at that origin. A companion paper on
recovery~\citep{nidhogg2026panicmode}, called the recovery companion below, proved such certificates as instruments for
resuming after damage. There, every prefix repair whose scan commits through the certificate's returned evidence places
a boundary at the certified position, so a scanner that has lost its place may trust that position. The certificate
beneath that result quantifies over every completely tokenizable context containing its evidence, and that quantifier
does more work than recovery asks of it. A position that every context agrees is a boundary is not merely a safe place
to resume after something went wrong. It is a safe place to begin, before anything has gone wrong at all.

Two tokens show why recognizing such a position needs a theorem at all. Over $T = \{\code{ab}, \code{b}\}$, a parallel
splitter that cuts at every occurrence of \code{b}, on the ground that \code{b} is a token and so surely starts one,
cuts the input \code{ab} at position 1. The sequential scan commits the single token \code{ab} and has no boundary
there, and the chunk before the cut, the lone byte \code{a}, matches no token. The split scan fails on an input the
sequential scan consumes. The cut is not suboptimal; it is unsound, and no amount of testing on inputs where \code{b}
happens to stand alone will say so. The certificate replaces that intuition with a condition quantified over every
completely tokenizable input. \code{b} occurs at offset 1 of \code{ab}, so some completely tokenizable input places it
past the start of a token, while \code{a} occurs in the token list only at offset zero and so begins a token in every
such input. Cutting at occurrences of \code{a} is sound, and Lemma~\ref{lem:interior} turns that observation into a
decision over the token list alone.

The paper studies one property of a token set through four questions. Does a proposed window force the token covering
its final byte to begin at its named origin wherever the window occurs, whatever the surrounding input, and what is the
exact inventory of such windows through a declared budget $H$? That property is the recovery companion's certificate.
How is such a position recognized, when its definition quantifies over inputs no finite enumeration can exhaust? What
does recognition cost, once that quantifier is discharged? And how often do such positions occur in the material a
scanner will actually meet? The four are not independent, since a useful certificate needs a decision procedure and a
cost bound, and it must also occur in the inputs being scanned. Wherever a scanner starts mid-input, splits work, joins
a stream late, or updates its tokenization after an edit, the question underneath is one of these four. This paper
answers all four for maximal-munch scanning, the first by deciding each proposed window and each exact inventory through
a declared budget, the second by theorem, the third by search-space bounds read off the constructions, and the fourth by
measurement on named vocabularies and grammars.

The two predecessors, the earlier byte- and window-certificate papers, introduce the first question and open the second.
The predecessors, the recovery companion and this paper share one proof pattern, local evidence that the surrounding
context cannot invalidate. The evidence secures a boundary in the predecessors, a resumption in the recovery companion
and a partition here. Certified positions were introduced as single bytes, with a controlled study in which eight of
fifteen token sets, the conventional C-like and JSON ones among them, certify no useful byte, that is, no byte occurring
in a completely tokenizable input~\citep{nidhogg2026splitpoints}. They were generalized to bounded windows that recover
a token origin where no single byte certifies~\citep{nidhogg2026splitwindows}. The first gave an exact criterion for
single-byte windows and the second, in the version cited, a sufficient condition for wider ones. Neither gave a
procedure deciding an arbitrary proposed window, and the window paper bounded its conservative model's search for a
certified window but did not compute the exact inventories studied here. This paper works with the same certificate,
reached through the recovery companion's repair-invariance argument, and adds decision procedures, search-space bounds
and measurements of anchor supply.

The contributions are:

\begin{itemize}
\item An explicit decision procedure for certification over arbitrary regular token sets, the armed-run verifier
(Section~\ref{sec:deciding}), with the underlying decidability credited to Karttunen's construction in
Section~\ref{sec:related}.
\item Exact anchor inventories through a declared window budget, with a synthesized counterexample on each refutation
(Section~\ref{sec:deciding}).
\item A decision whether the anchor-free span is bounded, with its exact value when it is (Section~\ref{sec:deciding}).
\item The differential auditor over pairs of token sets (Section~\ref{sec:diff}).
\item The measured anchor supply of maximal-munch vocabularies derived from byte-pair encoding (BPE,
Section~\ref{sec:measurements}).
\end{itemize}
None of these appear in the predecessors' cited versions, except that the byte paper's cited version builds on the
armed-run automaton, and the window paper's later version takes up the verifier and the supply study's vocabularies from
this paper.

Table~\ref{tab:results} lists what the paper proves and measures, and where. The main results are the decisions, and the
applications are their consequences. None of them requires a bounded anchor-free span. The span decision tells whether
the grammar and chosen inventory guarantee bounded gaps, and the measured anchor supply on a corpus is a fact about that
input.

\begin{table}[!htbp]
\centering
\small
\begin{tabular}{L{2.9cm}L{8.4cm}L{3.3cm}}
\toprule
Result & Statement & Where \\
\midrule
Certification decided & Certification is decided online with a completeness cutoff for literal token sets and offline
through an armed-run verifier for arbitrary regular token sets. Each window and origin reduces to one union-emptiness
problem with a counterexample on refutation, so inventories through a declared budget are computable &
Theorems~\ref{thm:cutoff} and~\ref{thm:offline}, Corollary~\ref{cor:inventory} \\
\addlinespace
Splitting & Cutting a completely tokenizable input at certified anchors and scanning the chunks independently reproduces
the sequential segmentation, for literal and for regular token sets. The split scan was also run on the complete
archived corpora of the recovery campaign's four rows & Lemma~\ref{lem:cut}, Theorem~\ref{thm:split},
Corollary~\ref{cor:splitregular},
Section~\ref{sec:measurements} \\
\addlinespace
Edit damage & A tokenizability-preserving byte substitution moves no boundary outside the open interval between the
certified anchors witnessed in the unchanged bytes on either side of it, nor at or below $p - L + 1$ when the edit is at
$p$ and every token has length at most $L$ & Theorem~\ref{thm:edit} \\
\addlinespace
Delimiters & Splitting at every occurrence of a delimiter is sound exactly when every occurrence of it inside every
token is token-initial, and splitting after every occurrence exactly when every such occurrence is token-final, so a
delimiter whose cuts would split a token is refused before any parallel splitter is built &
Theorem~\ref{thm:delimiter} \\
\addlinespace
Anchor-free span & The span decision tells whether a grammar's anchor-free span under a given inventory is bounded and
computes its exact value when it is, giving three for the UTF-8 shape & Theorem~\ref{thm:density} \\
\addlinespace
Differential auditor & The auditor decides whether two regular token sets differ in boundaries or domain and, when
they do, returns either an input both tokenize on which they place a boundary differently or an input exactly one of
them tokenizes &
Theorem~\ref{thm:diff} \\
\addlinespace
Supply & Anchors per KiB, gap tails and edit hulls for two BPE-derived maximal-munch vocabularies on held-out
slices of the simdjson fixture, and for the campaign's four C-like grammars on their complete archived corpora. Every
reported quantity is emitted by a shipped program or derived by the release gate from named emissions &
Section~\ref{sec:measurements} \\
\bottomrule
\end{tabular}
\caption{The principal contributions and where each is stated, proved or measured. The proofs establish the universal
claims, while source comments identify the executable checks on finite instances and the release gate checks the
manuscript against named emissions.}
\label{tab:results}
\end{table}

Three points of scope hold throughout. First, every certification decision takes a proposed window as input, and every
inventory computation also takes a declared finite budget $H$. Existence at unrestricted width is a separate question,
and Corollary~\ref{cor:existence} decides it both for windows required to occur in some completely tokenizable input and
without that requirement. Second, every supply and edit measurement is a fact about named material under a named policy
and inventory budget, and the edit measurements are also tied to a named seed. The scanner timings are facts about a
named machine as well, and the boundedness verdicts on the anchor-free span say when no grammar-global gap guarantee
exists. Third, the two subword vocabularies are BPE-derived token sets read under maximal munch, which is the scan the
theorems speak about and not the ranked merge procedure that produced them, and the manuscript says so wherever the
distinction bears on a number.

\FloatBarrier
\section{Setting}\label{sec:setting}

This section fixes the scan the paper reasons about, the boundaries it commits, and the certificate the later sections
decide. A \emph{token set} $T$ is a finite nonempty set of nonempty byte strings, and $L$ denotes the maximum token
length. The \emph{maximal-munch scan} of an input $x$ proceeds from position $0$, and at position $c$ it commits the
longest member of $T$ that is a prefix of $x[c..]$, advances past it, and fails if no member matches. For distinct
literal tokens the longest match is unique, so the scan is deterministic. An input is \emph{completely tokenizable} when
the scan consumes it whole. Its \emph{segmentation} is the resulting sequence of committed tokens, and its
\emph{boundaries} are the committed tokens' start positions.

A \emph{regular token set} is a finite nonempty ordered family of empty-string-free regular languages over bytes,
scanned the same way. At each position the scan commits a longest word of any of the languages that prefixes the
remaining input, breaking equal-length ties by list priority, so the committed token is single-valued. A literal token
set is the case whose languages are singletons. Two facts about the scan are used throughout and are immediate from the
definition.

\begin{lemma}[Memorylessness]\label{lem:memoryless}
If $b$ is a boundary of a completely tokenizable $x$, then the scan of the suffix $x[b..]$ taken as an input of its own
commits exactly the tokens the scan of $x$ commits from $b$ on.
\end{lemma}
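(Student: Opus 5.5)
The plan is an induction on the number of tokens committed, driven by one observation: at each step the scan's choice depends only on the remaining suffix. Write the scan of $x$ as the sequence of commit positions $c_0 = 0 < c_1 < \dots < c_m = |x|$, where the token committed at $c_i$ is $x[c_i..c_{i+1})$. Since $b$ is a boundary, $b = c_k$ for some $k$. The scan of $y = x[b..]$ has its own commit positions $d_0 = 0, d_1, \dots$. I will show by induction on $j$ that $d_j = c_{k+j} - b$, and that the token committed at $d_j$ in $y$ is the same string as the token committed at $c_{k+j}$ in $x$, for all $j \le m-k$.

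The induction step, which carries all the content, is a locality fact about a single scan step. At a position $c$ of an input $z$, the committed token is determined by the set of tokens that are prefixes of $z[c..]$. This holds in both settings of Section~\ref{sec:setting}. In the literal case the scan takes the longest member of $T$ prefixing $z[c..]$, which is unique. In the regular case it takes the longest word of any language prefixing $z[c..]$, breaking length ties by list priority. In each case the rule depends only on $z[c..]$, and so does the decision to fail. Now suppose $d_j = c_{k+j} - b$. Then $y[d_j..] = x[b + d_j..] = x[c_{k+j}..]$, so the two suffixes are the same string. The scan of $y$ at $d_j$ therefore commits exactly the token the scan of $x$ commits at $c_{k+j}$, and it does not fail there, because the scan of $x$ did not. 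This gives $d_{j+1} = c_{k+j+1} - b$. The base case is $d_0 = 0 = c_k - b$.

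To finish, at $j = m-k$ we get $d_{m-k} = |x| - b = |y|$, so the scan of $y$ consumes $y$ whole and stops. Hence $y$ is completely tokenizable, and its segmentation is exactly the tail $x[c_k..c_{k+1}), \dots, x[c_{m-1}..c_m)$ of the segmentation of $x$, with boundaries shifted by $b$. That is the statement. No step here is a real obstacle. The one point to state with care is that the tie-breaking and failure rules of the regular scan refer only to the remaining input and never to the absolute position or to earlier commits. I would record this explicitly, since later uses of this lemma (for example in Lemma~\ref{lem:cut}) apply it to regular token sets as well as literal ones.
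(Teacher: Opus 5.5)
Your proposal is correct and follows the paper's own argument: the paper's proof says that the scan's choice at each position depends only on the input from that position onward and that the scan carries no state beyond the position. Your induction on commit positions, with the locality of a single step (tie-breaking and failure included) as the inductive step, is that argument written out in full.
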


\begin{proof}
The scan's choice at any position depends only on the input from that position onward, and the scan of $x$ reaches $b$
with no other state than the position itself.
\end{proof}

\begin{lemma}[Boundaries confine matches]\label{lem:confine}
If $b$ is a boundary of a completely tokenizable $x$ and $c < b$ is a boundary, then the token committed at $c$ ends at
or before $b$.
\end{lemma}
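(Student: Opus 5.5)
The plan is to read the claim directly off the definition of the scan, using the fact that the boundaries of a completely tokenizable input are exactly the positions the scan visits, in increasing order, and that consecutive boundaries are separated by exactly one committed token. I would first record this structural fact explicitly. Let the boundaries of $x$ be $b_0 = 0 < b_1 < \dots < b_k$, with the scan consuming $x$ whole, so the last committed token ends at $|x|$. By definition of the scan, the token committed at $b_i$ has length $b_{i+1} - b_i$ when $i < k$. It ends at $|x|$ when $i = k$. In both cases it is nonempty, since every language in the token set is empty-string-free.

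Now take boundaries $c < b$. Then $c = b_i$ and $b = b_j$ for some $i < j$, and the token committed at $c$ ends at $b_{i+1}$. Since the boundary sequence is strictly increasing and $i + 1 \le j$, we get $b_{i+1} \le b_j = b$. That is exactly the claim. The only point needing care is that $i < k$ here, so the token at $c$ is not the final one; this holds because $b$ is a boundary strictly after $c$.

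I expect no real obstacle. The one subtlety worth stating is the identification of ``boundary'' with ``position the scan reaches'' and the fact that the scan advances exactly past the committed token with no overlap or skipping. This holds verbatim for regular token sets as well as literal ones, since tie-breaking by priority makes the committed token single-valued. The argument therefore covers both settings without change, in the same spirit as Lemma~\ref{lem:memoryless}.
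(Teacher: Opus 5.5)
Your proof is correct and follows the paper's argument: the paper notes only that the committed tokens partition $x$, so the token at $c$ ends at the next boundary after $c$, which is at most $b$. Your indexing $b_0 < \dots < b_k$, and your check that $i < k$ because $b$ lies strictly after $c$, simply write that one line out in full.
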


\begin{proof}
Committed tokens partition $x$, so the token at $c$ ends at the next boundary after $c$, which is at most $b$.
\end{proof}

A byte $\beta$ is \emph{certified} for $T$ when in every completely tokenizable input every occurrence of $\beta$ begins
a committed token. A window $(W, o)$, with $0 \le o < |W|$, is a \emph{certified window} when, at every occurrence of
$W$ in every completely tokenizable input, the token covering the occurrence's final byte begins exactly $o$ bytes into
the occurrence. The quantifier is per occurrence, so every occurrence must satisfy the condition. The certificate is
sufficient for a sound cut and stronger than a boundary at the origin. Over $T = \{a, b\}$ every occurrence of $ab$ may
be cut at offset $0$, since every byte is a token, yet $(ab, 0)$ is not certified, because the token covering its final
byte begins one byte in, and the decider of Section~\ref{sec:deciding} says so with $ab$ as its counterexample. Which
occurrence-based cut rules beyond this certificate are always sound is not decided here.

A \emph{certified anchor} of a particular input consists of a boundary position and an occurrence of a certified byte or
window that forces that boundary, at the byte itself or at the window's origin. The witnessing occurrence is part of the
anchor, because the edit theorem below conditions on where it lies. The recovery companion proves these certificates
sound for recovery, reading them after damage, while this paper reads the same definitions forward. Certification is
decidable, which Section~\ref{sec:deciding} establishes next, and Section~\ref{sec:splitting} draws the consequences of
a certificate once decided.

\section{Deciding certification}\label{sec:deciding}

A certificate quantifies over all inputs, and this section gives finite procedures that decide that condition, with
compact proofs. Every construction behind the fixed-pair decisions, the finite inventories, the anchor-free span and the
auditor of Section~\ref{sec:diff} is implemented, except the unrestricted-width corollary, whose factors-and-complement
construction is proved and not built. Each lemma is also checked by an exhaustive small-universe sweep in the
repository's exploration programs. Those sweeps, and every finite exhaustive sweep this paper reports, test the
implementations and the worked instances over the universe each one declares. The written proofs establish the general
theorems, and no count in this paper is offered as evidence for a statement quantifying over all token sets and all
inputs.

\begin{lemma}[Interior-byte characterization]\label{lem:interior}
For a regular token set, a byte is certified exactly when every occurrence of it in a word of any token language is
initial.
\end{lemma}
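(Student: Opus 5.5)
The plan is to prove the two directions separately, each directly from the definitions of Section~\ref{sec:setting}. Both rest on the fact that every position of a completely tokenizable input lies inside exactly one committed token, and that every committed token is a word of one of the token languages.

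For sufficiency, I assume every occurrence of $\beta$ in every word of every token language is initial. I take a completely tokenizable $x$ and a position $i$ with $x[i] = \beta$. Since the committed tokens partition $x$, some committed token $w$, a word of some language $L_j$, starts at a boundary $b \le i$ and covers $i$. Then $w[i-b] = \beta$ is an occurrence of $\beta$ in a word of $L_j$, so by hypothesis $i - b = 0$. Hence $i$ is the boundary $b$, and the occurrence begins a committed token. Since $x$ and $i$ were arbitrary, $\beta$ is certified.

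For necessity, I argue by contraposition. Suppose some word $w \in L_j$ has $w[k] = \beta$ with $k > 0$; since $w$ is nonempty, $k < |w|$. I take the input $x = w$ itself. At position $0$ the scan looks for a longest word of any language that is a prefix of $x$. No prefix of $x$ is longer than $|w|$, and $w$ itself qualifies, so the committed token has length exactly $|w|$. Tie-breaking by list priority may change which language it is attributed to, but not the committed string, which must be $w$.

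The scan therefore consumes $x$ whole with the single boundary $0$, so $x$ is completely tokenizable. The occurrence of $\beta$ at position $k > 0$ does not begin a committed token, so $\beta$ is not certified.

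I expect no step to be a real obstacle. The only point needing care is the tie-breaking remark in the necessity direction. There I must make sure that maximal munch on $x = w$ commits a token of length exactly $|w|$, rather than stopping short or matching a different string. This holds because a word of length $|w|$ that prefixes $w$ must equal $w$, whichever language supplies it. Memorylessness (Lemma~\ref{lem:memoryless}) is not needed, since the witness input is a single token.
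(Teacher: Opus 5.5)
Your proposal is correct and follows essentially the same route as the paper. The paper likewise uses the single-token input $t$ as the witness for necessity, noting that no longer token can be a prefix of it whatever the priority. For sufficiency it uses the fact that every occurrence in a tokenizable input sits inside a committed token at the offset it has in that token.
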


\begin{proof}
If some token $t$ carries the byte at a noninitial offset, the input $t$ alone is completely tokenizable as the single
token $t$, whatever the priority, since no longer token is a prefix of $t$. That occurrence does not begin a token.
Conversely, every occurrence in a tokenizable input lies inside some committed token at the offset it has in that token,
so if every such offset is zero, every occurrence begins a token.
\end{proof}

\paragraph{Convention: non-nullable token languages.} Every regular token set in this paper is empty-string-free, in
that no token language contains the empty string, so every token language is non-nullable and every match consumes at
least one byte. A nullable token does not break the choice, since finitely many token languages still have a longest
matching prefix at every position and the empty match is one candidate among them, so the greedy rule remains well
defined pointwise. It breaks progress. A committed zero-length match neither advances the scan nor closes a segment, so
the process need not terminate, and the marking representation loses its correspondence with positions because a
boundary can repeat without any byte between. This standing convention is fixed in the Setting and applies to every
regular token set, so individual statements do not repeat it. The literal token sets of Section~\ref{sec:setting} are
sets of nonempty byte strings by definition and satisfy it already. Ties between distinct token languages matching the
same longest prefix resolve by list priority, and the reference scans implement that rule.

Wider windows need an automaton model of the scan, built first for a literal token set. Feed the input byte by byte and
keep the \emph{pending tail}, the suffix not yet consumed by committed tokens. The commit of a token is final exactly
when the tail cannot be extended into a strictly longer token.

\begin{lemma}[Pending-state invariant]\label{lem:pending}
An unresolved pending tail is a proper prefix of some token. Hence the online scan is a deterministic automaton whose
states are the token prefixes, at most the sum of token lengths many, plus a dead state. End of input resolves the tail
by a plain scan of it, so acceptance is the predicate that the tail scans clean.
\end{lemma}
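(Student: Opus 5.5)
I will prove the pending-state invariant by induction on the number of bytes fed, making precise the online procedure that turns the maximal-munch scan of Section~\ref{sec:setting} into a byte-at-a-time process. The online scanner keeps a pending tail $u$, the bytes since the last committed boundary. On reading byte $\beta$ it looks at $u\beta$. If $u\beta$ is a proper prefix of some token, or a token that is itself a proper prefix of a longer token, it keeps $u\beta$ pending. Otherwise it resolves: it commits the longest token that is a prefix of $u\beta$, if there is one. Since no continuation can lengthen the match, that choice is final, exactly as in the sentence before the statement. It then restarts on the leftover bytes. The invariant to maintain is: \emph{after each step, the pending tail is either empty or a proper prefix of some token, or the scanner is dead}. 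The base case, the empty tail, is immediate.

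For the inductive step, suppose $u$ is empty or a proper prefix of a token. If $u\beta$ is still a proper prefix of some token, the invariant holds. Otherwise no token extends $u\beta$ strictly, so the scan of $x$ from the current boundary commits the longest token prefix of $u\beta$. This token has length at most $|u\beta|$, since every candidate is a prefix of $u\beta$ or would extend it. If there is none, the scan fails and we move to the dead state. If there is one, the leftover $r$ is a proper suffix of $u\beta$, and I replay $r$ through the same procedure from the empty tail. This recursion terminates because $|r| < |u\beta|$ and every token is nonempty. By the inductive hypothesis applied along the replay, it ends in a proper prefix or in the dead state. By Lemma~\ref{lem:memoryless}, the committed tokens agree with the offline scan of $x$: each commit is the longest-match choice at a boundary, and it depends only on bytes the scanner has already seen.

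Determinism and the state count follow. The next state is a function of the pending tail and the byte alone, because the resolution just described depends only on $u\beta$. The states are therefore the distinct proper token prefixes, including the empty prefix, plus the dead state. Each token $t$ contributes at most $|t|$ proper prefixes, so there are at most $\sum_{t\in T}|t|$ live states. At end of input the remaining tail $u$ has no further bytes, so each match in it is final. Scanning $u$ plainly with maximal munch reproduces what the offline scan does on the last $|u|$ bytes, again by Lemma~\ref{lem:memoryless} at the last committed boundary. Hence $x$ is completely tokenizable exactly when this plain scan of $u$ consumes $u$ whole, which is the stated acceptance predicate. This predicate is a fixed property of each state, so it defines the final states.

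The main obstacle I expect is the resolution step. The leftover after a commit can itself need several commits, or can fail, and the invariant must hold after that cascade as well as after a single byte. I would handle this with a nested induction on $|u\beta|$. I would also check carefully that committing "when no strict extension is possible" gives exactly the offline longest match, and not a shorter token committed too early. This is where the condition ``a token that is a proper prefix of a longer token stays pending'' is essential.
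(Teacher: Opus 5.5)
Your proposal is correct and follows the paper's proof. A tail that is not a proper prefix of any token has a final longest match at its front, which either commits or sends the scan to the dead state, and the commit loop repeats until the tail is unresolved or empty; your byte-by-byte replay of the leftover from the empty tail matches the paper's loop on the whole leftover, and your nested induction, state count and end-of-input argument spell out what the paper leaves implicit. One small point is that Lemma~\ref{lem:memoryless} is stated only for completely tokenizable inputs, so where you cite it for an arbitrary input you are using the one-line fact from its proof, that the scan's choice at a position depends only on the suffix from that position.
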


\begin{proof}
If the tail is not a proper prefix of any token, no longer token can overtake the longest match at its front, so that
match is final, and either it commits and shortens the tail, or nothing matches and no extension ever can, the dead
state. The commit loop repeats until the tail is unresolved or empty.
\end{proof}

\begin{theorem}[Window certification decides, with a completeness cutoff]\label{thm:cutoff}
Let $T$ be a token set with maximum token length $L$, and let $(W, o)$ be a window with $|W| = h$. A violating
completely tokenizable input, a counterexample to the certificate, exists if and only if one of length at most
$N = h + 2L - 2$ does. Certification of $(W, o)$ is therefore decidable, and a product of the pending automaton with an
occurrence tracker decides it without enumerating inputs.
\end{theorem}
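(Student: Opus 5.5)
The plan is a shrinking argument, with decidability then read off the bound. One direction is trivial, since a short counterexample is a counterexample. For the other, fix a violating completely tokenizable input $x$ with an occurrence of $W$ at position $q$. Violation means the committed token covering byte $q+h-1$ begins somewhere other than $q+o$. I will cut $x$ down to a window of tokens around the occurrence and show the cut input is still completely tokenizable, with the same segmentation on the retained part.

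Let $c$ be the boundary of the committed token covering byte $q$, so $c \le q$ and $q-c \le L-1$. Let $d$ be the end of the committed token covering byte $q+h-1$, so $d-(q+h) \le L-1$. Both $c$ and $d$ are boundaries of $x$, or $d=|x|$. I claim $y = x[c..d]$ is completely tokenizable, and that its scan commits exactly the tokens of $x$ lying in $[c,d)$, shifted by $-c$.

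By Lemma~\ref{lem:memoryless} the scan of $x[c..]$ reproduces the tokens of $x$ from $c$ on. So it suffices to compare the scan of $x[c..]$ with that of its prefix $y$. At each position $e$ with $c \le e < d$ that is a boundary of $x$, consider the token committed in $x$. By Lemma~\ref{lem:confine} (applied to $d$) it ends at or before $d$, so it is still a prefix of $y[e-c..]$. Conversely, every token that prefixes $y[e-c..]$ also prefixes $x[e..]$. So the set of matching tokens of $y$ at $e$ is a subset of that of $x$ and contains $x$'s longest match. Hence the longest match is the same token, and for literal tokens it is unique. Induction on boundaries gives the claim, and the scan ends exactly at $|y|$.

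The occurrence of $W$ sits at $q-c$ in $y$, and the token covering its last byte is the same token. It therefore still starts at the wrong offset, so $y$ is a counterexample of length at most $(L-1)+h+(L-1)=N$.

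Decidability then follows by enumerating the finitely many inputs of length at most $N$, scanning each, and checking every occurrence. For the enumeration-free claim I would form the product of the pending automaton of Lemma~\ref{lem:pending} with a tracker. The tracker holds the last $h$ bytes read, whether an occurrence of $W$ is currently "armed", and the relative offsets of resolved boundaries inside it. Commits resolve with delay less than $L$, so the tracker only needs to remember a bounded lookback of positions, and the state space is finite. A violation is then reachability of an accepting state, meaning the tail scans clean, through a state that has flagged a mismatch between the origin $q+o$ and the start of the token covering the last byte.

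I expect this automaton step to be the main obstacle. The difficulty is the bookkeeping that attributes each delayed commit to the correct armed occurrence, including occurrences that overlap one another. The shrinking argument itself is short.
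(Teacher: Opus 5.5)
Your truncation argument is the paper's own, and it is complete. You cut the head at the start of the token covering the occurrence's first byte (Lemma~\ref{lem:memoryless}) and cut the tail after the token covering its last byte, where deleting a suffix removes candidates but never the winner. That is your Lemma~\ref{lem:confine} step, which the paper argues inline. Adding $(L-1)+h+(L-1)$ gives $N$, and enumeration through $N$ decides certification. (You want $x[c..d)$, not $x[c..d]$.)

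The step you leave open is the enumeration-free product. The missing idea is that the question is existential, so the product need not watch every occurrence:
\begin{itemize}
\item The paper's search nondeterministically designates one occurrence. It carries only how much of $W$ that occurrence has matched ($h+1$ values).
\item After the occurrence's final byte is read, it carries that byte's offset inside the pending tail. This offset is below $L$ by Lemma~\ref{lem:pending} and decreases only as bytes commit.
\item When the commit that consumes the final byte happens, the committed token starts at the tail's start. The current offset is therefore the distance from the covering token's start to the final byte, and the violation test is that this differs from $h-1-o$. The origin needs no separate tracking.
\end{itemize}
Overlapping occurrences then cause no bookkeeping, since a reachability search tries every choice. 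No byte buffer is needed, so there are at most $|P(T)|(h+1)(L+2)$ configurations instead of a factor $|\Sigma|^{h}$ from storing the last $h$ bytes. Your deterministic version could be finished with a bounded buffer of boundary bits, but at that exponential cost.

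One further point for your acceptance condition. The covering token may be fixed only by the end-of-input scan of the pending tail. Over $\{\code{ba},\code{bac}\}$ with $W=\code{a}$ and $o=0$, the shortest counterexample \code{ba} is of this kind. Your truncated $y$ always ends exactly at the covering token's end, so it can be of this kind too. The final tail scan must therefore be allowed to raise the violation, not only streaming commits.
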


\begin{proof}
The proof truncates a violating input at both ends and needs no pumping argument.

\emph{The head.} Let $x$ violate, with the occurrence at $p$, and let $b$ be the greatest boundary of $x$ at or below
$p$. By Lemma~\ref{lem:memoryless} the suffix $x[b..]$ is completely tokenizable and its segmentation is the
segmentation of $x$ from $b$ on, so the occurrence sits at $p - b$ with the same covering tokens and the violation
survives. The token committed at $b$ covers position $p$, so its length is at least $p - b + 1$, giving
$p - b \le L - 1$, so at most $L - 1$ bytes precede the occurrence.

\emph{The tail.} Work in that suffix, let $t$ be the token covering the occurrence's final byte, and let $e$ be $t$'s
last position. Truncate after $e$. The origin's own position needs no separate accounting, since a token covering it
either ends before the occurrence's final byte or covers that byte as well, and then it is $t$. Every token of the
segmentation beginning at or before $e$ ends at or before $e$, so the truncated input is a whole prefix of that
segmentation, and maximal munch reproduces it. At each position the token the full input's scan takes lies entirely
within the truncated remainder, and deleting a suffix makes no longer token match. The truncated input is therefore
completely tokenizable, still contains the occurrence, and carries the same boundaries across it, so it still violates.
It ends at a token end, so end of input yields a complete scan, even when that token remains ambiguous until end of
input.

\emph{The length.} The occurrence begins at index at most $L - 1$ and is $h$ bytes long, so its final byte has index at
most $L + h - 2$. The token covering that byte begins at or before it and has length at most $L$, so $e \le 2L + h - 3$
and the truncated input has length at most $N = h + 2L - 2$.

Enumerating all inputs through length $N$ and checking their greedy segmentations therefore decides certification, and
the enumeration helper uses this $N$ as its cutoff. Tokens that resolve only at end of input need no separate case. Over
$T = \{\code{ba}, \code{bac}\}$ with $W = \code{a}$ and $o = 0$, the commit of \code{ba} is final only when the input
ends, since one further byte would let \code{bac} overtake it, and the truncation cuts after that token's last position
all the same.

The product procedure decides certification without enumerating inputs. It searches configurations made of a pending
state, the occurrence match's progress and the violation phase, and accepts a violating configuration exactly when its
pending state can still be completed to a completely tokenizable input. The occurrence tracker it carries records the
designated occurrence's final-byte offset within the pending tail, initially at most $L - 1$, decreasing only as bytes
commit, and resolved at a streaming commit or at end of input, so the product has at most $|P(T)|(h + 1)(L + 2)$
configurations, $O(|P(T)| \cdot h \cdot L)$, where $P(T)$ is the set of proper token prefixes, the empty one included.
\end{proof}

% machine-check: applications.py, sweep_cutoff_phases (238,662 violating occurrences over the small
% universe carry the pending-state invariant the online automaton needs, every pending tail shorter
% than the longest token and the covering commit within L bytes of its start; and over
% 3,684 refuted triples the minimal violation fits the derived $|W| + 2L - 2$, tight on at least one
% of them, that constant being the enumeration cutoff itself. explore_certification.py walks both
% truncations on every violation it finds, verifying the designated occurrence where it sits rather
% than requiring it to be the first in its input, and its --wide emission declares the parameters,
% the count of occurrences walked, and how many of them were oversized before truncation.

The cutoff is attained, so at these parameters it is tight. Over $T = \{\code{aab}\}$ with $W = \code{ba}$ and $o = 0$,
the completely tokenizable inputs are exactly the repetitions of \code{aab}, and the shortest one containing \code{ba}
at all is $x = \code{aabaab}$. The occurrence begins at position 2, while the token covering its final byte begins at
position 3, one byte past the origin, so $x$ violates. Here $L = 3$ and $h = 2$, so $N = h + 2L - 2 = 6$, which is
exactly $|x|$, and no shorter tokenizable input contains the window at all.

The online route has a limit of its own. Over $T = \{\code{a}^{+}\code{b}, \code{a}\}$, after reading $\code{a}^{n}$ the
end of the first token is still undecided, falling after the first \code{a} if the input ends there and after the
\code{b} if one follows, so no fixed lookahead settles it, and a subsequential finite-state transducer, which emits its
output irrevocably as it reads, cannot produce the boundary stream at all. Certification remains decidable for this set,
because a verifier can check completed segmentations rather than compute them.

\begin{definition}[Armed-run automaton]\label{def:armed}
Fix a regular token set. The armed-run automaton, the armed-run verifier when it checks a proposed marking, reads an
input together with a boundary marking, one bit per position, from a distinct unconsumed initial state that no consumed
byte revisits. At each mark it starts a fresh run of every token automaton. A run is unarmed until the next mark, where
the segment it has read must be accepted by some token, and armed from then on. An armed accept at any later position
rejects the marking, since it exhibits a longer match the greedy scan would have taken. The automaton accepts when the
final segment closes at end of input and no run has rejected, and it accepts the empty marking in its initial state.
Every nonempty marking it accepts marks position zero, a mark is read before the byte at its position, and end of input
closes the final segment.
\end{definition}

\begin{theorem}[Offline verifiability]\label{thm:offline}
A boundary marking of an input, carrying boundary positions and no token identities, is the boundary projection of the
greedy segmentation of a completely tokenizable input exactly when the armed-run automaton accepts it. The automaton's
state space is finite. Consequently window certification is decidable for arbitrary regular token sets.
\end{theorem}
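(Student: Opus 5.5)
The proof has three parts. The first shows that acceptance by the armed-run automaton is exactly greedy correctness of the marking. The second shows the state space is finite. The third reduces window certification to a regular-language emptiness check.

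\emph{Soundness and completeness of acceptance.} Let $0 = m_0 < m_1 < \dots < m_k = |x|$ be the marks, with end of input closing the last segment. The automaton accepts exactly when two conditions hold. First, each segment $x[m_i..m_{i+1})$ lies in some token language; this is the unarmed check at each closing mark. Second, no token language contains a word $x[m_i..j)$ with $j > m_{i+1}$; this is the absence of armed accepts.

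I prove by induction on $i$ that the greedy scan of $x$ has boundaries exactly $m_0,\dots,m_i$ up to $m_i$. At $m_i$ the scan commits the longest word of any token language prefixing $x[m_i..]$.
\begin{itemize}
\item The second condition says no candidate is longer than $m_{i+1}-m_i$.
\item The first says a candidate of exactly that length exists.
\end{itemize}
So the committed length is $m_{i+1}-m_i$. Ties among languages only affect token identity, which the marking does not carry. The scan consumes $x$ whole because the last segment closes at $|x|$.

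For the converse, take a greedy segmentation of a completely tokenizable $x$ and mark its boundary projection. Each segment is in some language, which satisfies the first condition. Maximality of each greedy match rules out an armed accept, which gives the second condition. The empty input and the requirement that position zero is marked are handled by the distinct initial state, which matches the scan's behaviour on the empty input.

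\emph{Finiteness.} This is the step I expect to need the most care, since unboundedly many runs are started. The fix is to track sets rather than runs. The state records:
\begin{itemize}
\item the set $U$ of states reached by unarmed runs started at the current segment's mark, one product DFA per token language;
\item the set $A$ of states reached by armed runs from all earlier marks;
\item a rejected flag;
\item the initial-state flag.
\end{itemize}
Runs from different marks that reach the same DFA state behave identically from then on, so keeping them as a set loses nothing. Hence $A$ is a subset of the finite state set, and the whole configuration space has size at most $2^{2|Q|}\cdot O(1)$.

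At a mark the automaton does three things. It checks that $U$ contains a final state of some language. It moves $U$ into $A$. It resets $U$ to the start states. On each byte it steps $U$ and $A$, and it rejects if $A$ contains any final state. Non-nullability is what ensures that a mark at position $m$ never closes an empty segment.

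\emph{Decidability of certification.} Read the automaton over the alphabet of byte and mark-bit pairs. Its language $M$ is regular, and by the first part it is exactly the set of greedy markings of completely tokenizable inputs. The violations of $(W,o)$ form the regular language of marked words with the following shape:
\begin{itemize}
\item there is a position $p$ with $x[p..p+h) = W$;
\item let $q = p+h-1$ be the occurrence's final byte; the last mark at or before $q$ is not $p+o$.
\end{itemize}
A finite automaton can guess $p$ nondeterministically, match $W$, and remember the latest mark seen. Certification then holds iff the intersection of $M$ with this violation language is empty. Emptiness of a product of finite automata is decidable, and a refutation yields a counterexample input by reading off an accepted word. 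This is the single emptiness instance per window and origin that Corollary~\ref{cor:inventory} later iterates over the windows through budget $H$.
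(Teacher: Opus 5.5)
Your proposal is correct and follows the paper's proof essentially step for step. It uses the same longest-match induction for soundness and completeness, the same collapse of unboundedly many runs into a finite set of (automaton state, armed flag) configurations plus a distinct initial state, and the same reduction of certification to emptiness of the intersection with the language of markings carrying a badly covered occurrence, the only difference being that you get finiteness from one DFA per token language where the paper uses Brzozowski derivatives modulo associativity, commutativity and idempotence of alternation.
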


\begin{proof}
Let the marking cut $x$ into segments $s_1, \dots, s_k$ at boundaries $b_0 = 0 < b_1 < \dots < b_k = |x|$. The empty
input is the case $k = 0$, where no token matches it, the languages being empty-string-free, so the greedy scan commits
nothing and ends complete, and the automaton accepts the empty marking in its initial state. That state is distinct
because the ordinary start configuration can recur. Under the token language $(\code{ab})^{*}\code{a}$, reading
$\code{ab}$ after the initial mark with no further mark restores it, and accepting there would confuse the empty input
with a nonempty one. The finiteness count below includes that state.

\emph{Soundness.} Suppose the automaton accepts. Each closure at $b_i$ required some unarmed run started at $b_{i-1}$ to
accept, so each $s_i$ is a token. We show by induction that the greedy scan of $x$ commits exactly $s_i$ at $b_{i-1}$.
It suffices that no token strictly longer than $s_i$ is a prefix of $x[b_{i-1}..]$. Were $t$ such a token, the run of
$t$'s automaton started at the mark $b_{i-1}$ would still be alive at $b_i$, be armed from that mark on, and accept at
$b_{i-1} + |t| > b_i$, an armed accept, contradicting acceptance. Hence $s_i$ is the longest token prefix, the greedy
scan commits it, and since the final segment closes at $|x|$, the input is completely tokenizable with the accepted
marking as its greedy segmentation.

\emph{Completeness.} The greedy segmentation of a completely tokenizable input passes both tests. Every segment is the
committed token, so every closure succeeds. An armed accept at position $j > b_i$ by a run started at $b_{i-1}$ would
exhibit a token prefix of $x[b_{i-1}..]$ of length $j - b_{i-1} > |s_i|$, contradicting the longest-match choice at
$b_{i-1}$.

\emph{Finiteness.} An ordinary state is a set of configurations, one configuration being a token automaton's derivative
paired with an armed flag, and beside the ordinary states there is the single unconsumed initial state, which no
ordinary state equals. The derivatives are finitely many up to the standard associativity, commutativity, and
idempotence quotient of alternation, so the configurations and the ordinary states are finitely many, and the accepted
marked-string language is regular. Certification of a pair $(W, o)$ is then the emptiness of the intersection of this
regular language with the regular language of markings carrying a badly covered occurrence of $W$, and a finite
reachability search decides that emptiness and returns a violating input when one exists.
\end{proof}

\begin{corollary}[Computable anchor inventories]\label{cor:inventory}
For a regular token set and a declared window budget $H$, the certified pairs $(W, o)$ with $|W| \le H$ are computable
by one union-emptiness problem per pair over the armed-run product, with no string enumeration and no cutoff.
\end{corollary}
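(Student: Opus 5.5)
The plan is to reduce each pair $(W,o)$ with $|W|\le H$ to one emptiness question over the armed-run product of Theorem~\ref{thm:offline}. Then I observe that there are only finitely many such pairs, so deciding each one in turn computes the inventory.

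\emph{Finiteness of the candidates.} Over the byte alphabet there are at most $\sum_{h=1}^{H} 256^h$ windows of length at most $H$. Each window $W$ has $|W|$ admissible origins $0\le o<|W|$. The candidate pairs therefore form an explicit finite list. Enumerating this list is enumeration of candidate certificates, not of inputs, which is the distinction the statement's ``no string enumeration'' refers to.

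\emph{The per-pair language.} For fixed $(W,o)$ with $|W|=h$, I build a finite automaton $V_{W,o}$ over marked strings, meaning bytes paired with boundary bits. It nondeterministically guesses the start $p$ of an occurrence of $W$ and checks the next $h$ bytes against $W$. It then records whether the last mark at or before $p+h-1$ lies exactly at $p+o$. A violation is either of two failures. In the first, some mark falls strictly inside $(p+o,\,p+h-1]$. In the second, no mark falls in $[p+o,\,p+h-1]$, so the covering token starts before the origin. The second case is checked entirely within the window: a token starting before $p+o$ and covering $p+h-1$ is equivalent to no mark in $[p+o,p+h-1]$. This equivalence holds because every accepted nonempty marking marks position $0$, so some mark precedes the occurrence. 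The automaton needs only $O(h)$ states beyond the guess.

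The violating language is the union of these two regular languages. Certification of $(W,o)$ holds exactly when the intersection of that union with the armed-run language is empty. By Theorem~\ref{thm:offline}, the armed-run language consists precisely of greedy boundary markings of completely tokenizable inputs, so a nonempty intersection is exactly a violating input together with its segmentation. Emptiness of the product is a finite reachability search over pairs (armed-run state, tracker state). The armed-run automaton is finite by the finiteness part of Theorem~\ref{thm:offline}, so the search terminates without any length cutoff. When the intersection is nonempty, the reachability path spells out a counterexample, which is read off by projecting away the marks.

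\emph{Main obstacle.} The delicate step is ensuring that the product checks the boundary covering the \emph{designated} occurrence rather than some other one. Occurrences of $W$ may overlap, and a single input may satisfy the condition at one occurrence and violate it at another. Since certification quantifies per occurrence, the nondeterministic guess of $p$ handles this, because acceptance needs only one badly covered occurrence. I also need to confirm that markings are read before bytes (Definition~\ref{def:armed}), so that the mark at $p+o$ is aligned with the right byte. Finally, I need to check that the empty input cannot contribute, which it cannot, since it contains no occurrence of $W$. With these points settled, iterating the decision over the finite candidate list yields the certified pairs of width at most $H$.
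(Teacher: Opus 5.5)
Your proposal is correct and follows essentially the paper's argument: a finite reachability search over the product of the armed-run verifier of Theorem~\ref{thm:offline} with a nondeterministically guessed occurrence tracker, the pair being certified exactly when no violating accepting run exists. Your two failure modes, a mark in $(p+o,\,p+h-1]$ or no mark in $[p+o,\,p+h-1]$, are the complement of the paper's single bit recording whether the latest mark at the window's final byte sits at the origin; the only fact your second case needs is that position $0$ is marked and lies at or before $p+h-1$, which also covers $p=0$, where no mark strictly precedes the occurrence.
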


\begin{proof}
By Theorem~\ref{thm:offline}, a search over the product walks exactly the greedy segmentations of completely tokenizable
inputs containing the occurrence, and one bit records whether the covering token starts at the origin when the window's
final byte is read. The product state space is finite, so breadth-first search terminates, and the pair is certified
exactly when the search finds no violating accepting run.
\end{proof}

The implementation searches two branches, one for an occurrence at the start of the input and one for a later
occurrence. Each follows valid markings byte by byte, using the armed simulation to reject nongreedy markings. It
chooses an occurrence nondeterministically, tracks its progress through the window, and records in one bit whether the
latest mark is at the origin. A violation exists exactly when a state reached after the window has this bit clear and
can still reach a valid end of input.

% machine-check: the construction itself is not built; it is Theorem~\ref{thm:offline}'s language put
% through the standard regular closure operations, and the left-extension half needs no computation.
% The occurring reading of the campaign certificates is decided by occurring_windows.py and
% reported in the measurement section, whose passage comment names the check.
\begin{corollary}[Existence at unrestricted width is decidable]\label{cor:existence}
For a fixed regular token set the certified pairs form an effectively regular language over the byte alphabet with one
added origin marker, and so do the certified pairs whose window occurs in some completely tokenizable input. Both
languages have decidable emptiness, and a shortest member of either is computable when one exists. Moreover
certification is closed under extension on the left, in that if $(W, o)$ is certified then so is $(UW, |U| + o)$ for
every byte string $U$.
\end{corollary}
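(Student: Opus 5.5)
The plan is to express certification as the complement of a regular ``violation'' language and then project onto the marked window. Let $G$ be the regular language of marked strings accepted by the armed-run automaton (Theorem~\ref{thm:offline}), over the alphabet of pairs (byte, mark bit). Add a second track carrying the window markers: a start marker $\langle$, an origin marker $\mid$, and an end marker $\rangle$, placed around one designated factor. The \emph{violating} configurations form a language $V$. Such a string lies in $G$ after the markers are erased. It also carries the three markers around a factor. Finally, the last mark at or before the position of $\rangle$'s final byte differs from the position of $\mid$. The last condition is a finite-state check: scan forward and remember whether the most recent boundary mark was read exactly at the origin marker. This is the same one-bit device used in Corollary~\ref{cor:inventory}, so $V$ is regular by product construction.

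Next, form the homomorphic image $B$ of $V$ that keeps only the marked window $W$ with its origin marker. Concretely, erase the boundary bits and all bytes outside $\langle\cdot\rangle$, and keep the bytes between the markers together with $\mid$. Regular languages are closed under homomorphism, so $B$ is effectively regular. The certified pairs are then $C = R \setminus B$, where $R$ is the regular language of well-formed strings $U\mid V'$ with $V'$ nonempty (the origin must satisfy $0 \le o < |W|$). Thus $C$ is effectively regular.

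For the occurring variant, take the image $O$ of $G$ with one marked factor, again under a homomorphism that keeps only the factor. Its intersection with $C$ (after an inverse homomorphism to place the origin marker anywhere) is again effectively regular. Emptiness of a regular language is decidable, and a shortest member is found by breadth-first search on a DFA for it. The one step needing care is that determinization and complement are applied to $B$, which can blow up exponentially. That is acceptable, since only effectiveness is claimed and the paper already marks this construction as proved but not built.

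Left extension is a direct argument needing no automata. Let $(W,o)$ be certified, and take any occurrence of $UW$ in a completely tokenizable $x$, at position $q$. The $W$ inside it is an occurrence at $q+|U|$ with the same final byte. Certification of $(W,o)$ makes the token covering that byte begin at $q+|U|+o$, which is exactly offset $|U|+o$ into the occurrence of $UW$. Since $0 \le |U|+o < |UW|$, the pair $(UW,|U|+o)$ is a legitimate certified window.

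I expect the main obstacle to be making the ``last mark before the final byte'' condition precise at the boundary cases. One case is an origin at position $0$ of the input, where the automaton's distinct initial state and the rule that a mark is read before its byte matter. The other is a covering token that resolves only at end of input. Both cases are handled because $G$ already certifies whole greedy segmentations: I would define the covering-token start purely from the marking, as the greatest mark $\le$ the final byte's index, and not from any online commit.
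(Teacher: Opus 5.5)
Your proposal is correct and follows the paper's proof. Both build the badly covered occurrences from the armed-run language $G$ with a one-bit origin flag, project them to encoded windows to get $B$, complement within the well-formed encodings, intersect with the factor language of occurring windows for the second reading, and prove left extension directly. The only difference is presentational: the paper gets the marked factors straight from the verifier, with every reachable state as a start and every co-accessible state as an accept, while you embed the window in a full marked word and erase the context by a homomorphism, which is a genuine letter-to-word map once your second track records whether each position lies inside the window.
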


\begin{proof}
Encode a candidate as $u \# v$ with $W = uv$, $o = |u|$ and $v$ nonempty, over the byte alphabet extended by one marker
$\#$ that occurs exactly once. Theorem~\ref{thm:offline} gives the regular language $G$ of valid boundary-marked inputs.
The factors of $G$ are regular. The same automaton, with every reachable state admitted as a start and every
co-accessible state as an accept, accepts exactly the marked stretches that occur inside some completely tokenizable
input. Place the marker in such a factor and record, with a finite flag, whether the token covering the factor's last
byte begins at the marker, which holds when the mark on the first byte after the marker is set and every later mark in
the factor is clear. Keep the factors where the flag fails, project away the mark track, and the result is the regular
language $B$ of encoded pairs having at least one badly covered occurrence in some completely tokenizable context.
Complement $B$ within the regular set of well formed encodings $\Sigma^{*}\#\Sigma^{+}$, writing $\Sigma$ for the byte
alphabet, and the complement is the language of certified pairs, regular and effectively constructed. Emptiness,
infinitude and a shortest member are then the standard decisions on a finite automaton.

For the left extension, fix an occurrence of $UW$ in a completely tokenizable input. It contains an occurrence of $W$
ending at the same byte, and the token covering that byte is the same token in both readings. Since $(W, o)$ is
certified, that token begins at offset $o$ within the occurrence of $W$, which is offset $|U| + o$ within the occurrence
of $UW$. Every occurrence of $UW$ arises this way, so $(UW, |U| + o)$ is certified. Prefixing arbitrary strings then
produces certified pairs of every greater width, so the unrestricted certified language is either empty or holds
certified windows of every width from its shortest one on. For the occurring language, the restriction is an
intersection with the marked factor language of $G$'s byte projection, which is regular by the same operations. A
prefixed $UW$ need not occur even when $W$ does, so the left-extension closure is a statement about the unrestricted
reading.
\end{proof}

A window that occurs in no completely tokenizable input is certified vacuously. Thus, if a token set leaves a byte
unused, the unrestricted language is already nonempty at width one. The occurring form of the existence decision
excludes such vacuous certificates. The measurements use inventories through a declared budget and not the
unrestricted-width existence decision, and Section~\ref{sec:measurements} states beside the campaign rows which reading
each inventory reports and which campaign certificates occur.

\begin{theorem}[The anchor-free span is decidable]\label{thm:density}
For a regular token set and a finite certified inventory, define the anchor-free span as the supremum, over completely
tokenizable inputs, of the lengths of their anchor-free stretches, the runs of positions carrying no inventory anchor,
so that it is the longest anchor-free stretch over all of them. Whether this span is finite is decidable, and when it is
finite its value is computable.
\end{theorem}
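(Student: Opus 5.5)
The plan is to build one finite automaton over boundary-marked inputs that also tracks anchor occurrences and the length of the current anchor-free stretch. The span then becomes a longest-path question in that automaton's reachable, co-accessible part.

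\emph{Step one: recognize anchors.} Let $H$ be the largest window width in the inventory. Extend the armed-run automaton of Definition~\ref{def:armed}, whose accepted language is exactly the greedy segmentations of completely tokenizable inputs by Theorem~\ref{thm:offline}, with a buffer of the last $H$ bytes read. An inventory occurrence of $(W, o)$ ending at the current byte is visible in this buffer. Its anchor position, the origin, lies at most $H - 1$ bytes back. Because every inventory pair is certified, the origin is a boundary in every accepted marking, so the anchor is a function of the bytes alone. An anchor therefore marks a position up to $H - 1$ bytes behind the reading head.

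\emph{Step two: handle the delay.} Keep a bit vector of length $H$ recording which of the last $H$ positions are already known to carry an anchor. A position leaves the vector only when no later occurrence can still anchor it, that is, once $H$ further bytes have been read or end of input is reached. Positions within $H$ bytes of either end of the input need explicit treatment, since the stretch runs to the boundary there. The product $\mathcal{A}$ of the armed-run automaton, the byte buffer and the delayed anchor vector is finite. A position is finalized as anchored or anchor-free at a fixed state transition.

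\emph{Step three: reduce to a graph question.} Restrict $\mathcal{A}$ to states that are reachable from the initial state and can reach acceptance. Call a transition anchor-free when it finalizes an anchor-free position. The span is infinite exactly when some cycle in this trimmed graph consists only of anchor-free transitions. Such a cycle pumps to arbitrarily long anchor-free stretches inside accepted, and hence completely tokenizable, inputs. Conversely, an anchor-free stretch longer than the number of states forces a repeated state along its anchor-free transitions, hence such a cycle. Both directions are decidable by restricting to anchor-free edges and detecting cycles, for instance with strongly connected components. When no such cycle exists, the anchor-free edges form a DAG. The exact span is then the longest path in that DAG, computed by dynamic programming. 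Paths may start or end at the input's ends, with pending positions flushed at end of input, so stretches touching the boundary are counted as well. The count is finite and exact, and a witness input is obtained by completing the path on both sides using reachability and co-accessibility.

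\emph{Main obstacle.} I expect the hardest part to be the delay bookkeeping in step two. An anchor can be confirmed only after bytes up to $H - 1$ positions past it are read, so "anchor-free" must be decided once per position, at the moment it can no longer be rescued. This is what makes the correspondence between paths and stretches exact, and it governs the edges at input start and end. My first attempt would be to emit each position's verdict exactly $H$ steps late, with an end-of-input flush. Certification of the inventory then guarantees that the verdict does not depend on the unseen future, since the origin is a boundary in every context, including the input's end.
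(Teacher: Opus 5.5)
Your proposal is correct and is essentially the paper's own proof. The paper uses the same credit product (the armed-run verifier, a bounded byte buffer and per-position anchor flags that settle before they leave), trims it to live states, decides unboundedness by a cycle of unanchored-exit transitions, and computes the finite value by a longest-path computation plus the end-of-input buffer. The one case the paper treats separately that you skip is the empty inventory, where your $H$ is undefined, but fixing any $H$ there, for instance $H = 0$, makes your construction reduce correctly to cycle detection and longest path in the trimmed verifier.
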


\begin{proof}
First consider the empty inventory, which the statement admits and which has no longest window for the construction
below to buffer. With no certificate every position is anchor-free, so the question is only whether completely
tokenizable inputs are arbitrarily long, and the union of the token languages settles it in three cases. If that union
is infinite it contains words of unbounded length, each a completely tokenizable one-token input, so the stretch is
unbounded. If it is finite and nonempty, take a word $t$ of maximum length. No token is longer, so the greedy scan of
$t^{n}$ commits $t$ each time, and those inputs are completely tokenizable of unbounded length. If it is empty, only the
empty input is completely tokenizable and the supremum is zero. An infinite regular language need not have a longest
word, which is why the first case cannot be folded into the second.

Assume from here that the inventory is nonempty. Write $M$ for the maximum window length in the inventory. When token
lengths have a finite bound $L$, $M$ need not be bounded by it. Build the \emph{credit product} from the armed-run
verifier of Theorem~\ref{thm:offline}, the last $M-1$ bytes read, and one flag per buffered position recording whether
some completed window has already credited it as an anchor. The product is finite, being a verifier state, a bounded
byte buffer, and a bounded flag vector.

\emph{Flags settle before they leave.} A window $W$ with origin $o$ credits the position $o$ bytes into its occurrence,
and it completes when its final byte is read, which is $|W| - 1 - o$ positions later. That age is at most $M - 1$, so a
position is credited, if at all, no later than the step at which it reaches age $M - 1$. Taking that step's credits
before the position leaves the buffer, it leaves with its flag final, and no later byte can change it.

\emph{Live states only.} Restrict the product to states reachable from the start and from which acceptance is still
reachable. Anchor-free stretches are measured on completely tokenizable inputs, and only live states occur in the runs
of such inputs, so nothing is lost and nothing spurious is admitted.

\emph{The unbounded case.} Suppose the live subgraph, restricted to transitions whose exiting position leaves
unanchored, contains a cycle. Every live state lies on some accepting run, and pumping the cycle inside such a run
yields accepted runs with arbitrarily long anchor-free stretches, so the span is unbounded. Conversely, if the span is
unbounded then some accepting run carries an anchor-free stretch of exits longer than the number of live states, the
buffered tail past the last exit contributing at most $M - 1$ more positions, and the stretch's exits repeat a state,
closing such a cycle. Bounded and acyclic therefore coincide, and cycle detection decides which holds.

\emph{The bounded case.} With that subgraph acyclic, let $r(s)$ be the longest anchor-free run of exits ending at the
live state $s$. The recurrence runs over incoming labeled transitions $s' \to s$, each contributing a candidate, namely
$r(s') + 1$ where the transition's exiting position leaves unanchored, $0$ where it leaves anchored, and $r(s')$ during
warm-up, where the transition exits no position at all. The value $r(s)$ is the largest candidate over all incoming
transitions, so a state carrying both anchored and unanchored incoming edges is not forced to either.

Only the unanchored-exit subgraph is required to be acyclic; anchored exits may close cycles in the whole live graph.
The values $r(s)$ still stabilize under repeated application of the recurrence. A warm-up transition strictly increases
the buffer's occupancy and no transition decreases it, so none lies on a cycle. Every cycle therefore consists of exits
and contains an anchored one, which resets the run to zero, since a cycle of unanchored exits would lie in the subgraph
the bounded case has just made acyclic. Run lengths are thus bounded by the number of live states, and the start state's
value is initialized to zero.

The values $r(s)$ account for the stretches of finalized exits. At end of input, the buffered flags account for the
stretches that extend into, or lie wholly within, the remaining buffer. At an accepting state the trailing unanchored
stretch of exits continues through the oldest contiguously unanchored buffered positions, and a stretch lying wholly
inside the buffer is the longest unanchored run among the flags. The supremum is therefore the largest of three maxima,
of $r(s)$ over live states $s$, of $r(s)$ extended through the oldest contiguously unanchored buffered positions over
accepting states $s$, and of the longest unanchored run among the flags of an accepting state. A prefix attaining the
first maximum extends to an accepting run because its state is live, and the other two maxima are attained on accepting
runs already.
\end{proof}

% machine-check: certified_gap_regex.py, sweep_bound_against_enumeration (four bounded verdicts,
% each equal to the longest anchor-free stretch over every tokenizable string to length twelve,
% and two unbounded verdicts whose measured stretch reaches that length); check_buffer_monotone,
% asserted inside every bound the module computes, is the buffer-occupancy invariant the bounded
% case's termination argument uses.
Over the small universes the bound of every bounded verdict equals the longest anchor-free stretch found by enumerating
all tokenizable strings through length twelve, and every unbounded verdict is matched by a measured stretch that reaches
that length. The enumeration checks the implementation within that range, and the proof above establishes the supremum
over all completely tokenizable inputs.

The UTF-8 shape's inventory yields the bound three, computed from the token set, while a miniature C-like grammar of
identifiers, numbers, spaces, and an operator has an unbounded anchor-free span, because its identifier tokens have
arbitrarily long interiors with no boundary. A grammar gets a synchronization guarantee from its design, and this
theorem decides whether the grammar and chosen inventory give a bounded anchor-free span and computes the bound when
they do. Bounded token length is needed for that guarantee but does not supply it, as Proposition~\ref{prop:length}
proves with the token set $\{\code{aa}\}$ worked in Section~\ref{sec:related}.

\begin{proposition}[Token length and the anchor-free span]\label{prop:length}
For a regular token set, bounded token length is necessary but not sufficient for a bounded anchor-free span. When the
token languages hold words of unbounded length, the anchor-free span is unbounded for every certified inventory, and
over $T = \{\code{aa}\}$, whose one token has length two, it is unbounded for every certified inventory as well.
\end{proposition}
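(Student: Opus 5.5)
The proposition has three parts, and I would prove them in this order: the unbounded-length case first, then the $\{\code{aa}\}$ example, and finally the word ``necessary'', which is the contrapositive of the first part.

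\emph{Unbounded token length.} Suppose the union of the token languages contains words of every length. A word $t$ in it is a completely tokenizable one-token input: maximal munch commits $t$ itself, since no token longer than $t$ is a prefix of $t$. The argument is the one in Lemma~\ref{lem:interior}. Its only boundary is position $0$.

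Fix any finite certified inventory. I claim every anchor in the input $t$ sits at position $0$. An anchor is a boundary forced by a certified occurrence, and the only boundary is $0$. So every position $1,\dots,|t|-1$ is anchor-free. That is a stretch of length $|t|-1$, and it is unbounded as $t$ ranges over the language. No gap-length bound from Theorem~\ref{thm:density} is needed. The one point to state carefully is that positions which are not boundaries cannot carry anchors, which holds by the definition of a certified anchor.

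\emph{The set $T = \{\code{aa}\}$.} Here the completely tokenizable inputs are exactly $\code{a}^{2n}$, and their boundaries are the even positions. I would show that every window occurring in such an input is not certified. An occurring window has the form $\code{a}^h$ sitting inside some $\code{a}^{2n}$. The same window also occurs shifted by one position in a longer $\code{a}^{2m}$. Under that shift the token covering its final byte starts at a position whose offset from the occurrence differs by one in parity, unless the covering token's start falls outside the window.

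So no origin $o$ works for all occurrences. If $(\code{a}^h, o)$ holds at one occurrence, it fails at the occurrence shifted by one. Concretely, $\code{a}^h$ occurs at both positions $2$ and $3$ of $\code{a}^{2m}$ once $2m \ge h+3$. The final bytes of these two occurrences land in tokens whose starts are at offsets differing by $1$ relative to the occurrence start, so at most one of the two offsets equals $o$.

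Certified windows that occur nowhere, such as any window containing a byte other than $\code{a}$, are vacuous. They contribute no anchors in $\code{a}^{2n}$. Hence every certified inventory yields no anchors at all on these inputs. The inputs $\code{a}^{2n}$ are then anchor-free stretches of length $2n$, and the span is unbounded even though $L = 2$.

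\emph{Necessity.} Necessity is exactly the first part read contrapositively: a bounded span forces bounded token length.

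I expect the main obstacle to be the parity argument. It has to be stated uniformly for all $h$ and $o$, including windows longer than every token, so I would write explicitly the start of the covering token in each of the two occurrences. The rest is direct from the definitions.
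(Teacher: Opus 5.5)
Your proposal is correct and is essentially the paper's proof. For the unbounded case you use single-token inputs, whose $|t|-1$ interior positions are not boundaries and so carry no anchor; for $\{\code{aa}\}$ you use the parity flip of the covering token's offset (your explicit occurrences at positions $2$ and $3$ are a concrete instance of the paper's parity remark) together with the vacuity of non-occurring windows. One wording slip: the hypothesis is that the token languages contain arbitrarily long words, not words of every length, though your argument only ever uses the former.
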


\begin{proof}
Every word $t$ of a token language is a completely tokenizable input on its own, committed whole as the single token $t$
whatever the priority, since no longer token is a prefix of $t$. Its $|t| - 1$ interior positions carry no boundary, so
no certificate makes any of them a boundary and none is a certified anchor, whatever the inventory. The anchor-free span
is therefore at least $|t| - 1$, and words of unbounded length make it unbounded. For the second half, the completely
tokenizable inputs over $\{\code{aa}\}$ are the even-length runs of \code{a}. An occurrence of a window of \code{a}s
beginning at position $p$ has its last byte covered by the token beginning at the greatest even position at or below
that byte, so the covering token's offset within the occurrence depends on the parity of $p$, and both parities occur
once the input is long enough. No occurring window is certified, and a window that occurs in no completely tokenizable
input anchors nothing, so the inputs $\code{a}^{2n}$ carry no anchor at all and make the span unbounded under every
certified inventory.
\end{proof}

With the certified bytes as the inventory, the anchor-free span is the synchronization distance, the longest run of a
valid stream that carries no certified byte, and computing it needs no buffer.

\begin{corollary}[Synchronization distance is computable]\label{cor:gap}
For a token set $T$, the supremum over completely tokenizable inputs of the longest run carrying no certified byte is
computable. It is unbounded exactly when the subgraph of the pending automaton's live states using only non-certified
bytes has a cycle, and otherwise it is the longest path through that subgraph.
\end{corollary}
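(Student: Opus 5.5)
The plan is to specialize Theorem~\ref{thm:density} to the inventory of certified bytes and then simplify its construction. A byte window has $M = 1$, so the credit product's byte buffer of length $M-1$ and its flag vector are both empty. Whether a position is anchored then depends only on the byte read there. By Lemma~\ref{lem:interior}, that byte is certified exactly when it occurs only initially in the tokens, which is a finite check on the token list. So every transition exits its position at the moment the byte is read, there is no warm-up phase, and a transition leaves its position unanchored exactly when its label is a non-certified byte.

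For the underlying automaton I would take the pending automaton of Lemma~\ref{lem:pending} in place of the armed-run verifier. That lemma makes the online scan a deterministic finite automaton over token prefixes plus a dead state. It accepts exactly the completely tokenizable inputs, through the end-of-input predicate that the tail scans clean. I would restrict it to live states, those reachable from the start and co-reachable to an accepting state. As in the ``live states only'' step of Theorem~\ref{thm:density}, the runs of completely tokenizable inputs visit exactly these states. A run of non-certified bytes inside such an input is then a path in the live subgraph whose edges are all labeled by non-certified bytes, and conversely every such path extends to an accepting run and so occurs in a completely tokenizable input.

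With this correspondence the two cases follow the theorem's argument.
\begin{itemize}
\item \emph{Cycle implies unbounded.} A cycle in the non-certified subgraph can be pumped inside an accepting run, which gives arbitrarily long runs without a certified byte.
\item \emph{Unbounded implies cycle.} A run without a certified byte longer than the number of live states repeats a state, and the repeat closes such a cycle.
\item \emph{Acyclic case.} The supremum equals the maximum number of edges on a path in the finite acyclic subgraph, which is computable by longest-path dynamic programming in topological order. Each such path is realized inside a completely tokenizable input because its endpoints are live.
\end{itemize}

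The main obstacle I expect is the boundary bookkeeping that the buffered theorem needed and this one must show is unnecessary. Anchoring has to be decidable at read time, and I will argue that this holds because certification of a byte is a property of the byte alone. I must also check that pending-state determinism loses no positions. An unresolved tail does not affect which bytes are certified, and the automaton reads every byte exactly once. Since the corollary says the run carries no certified byte, the measured quantity is the number of consecutive bytes, which equals the number of edges on the path.
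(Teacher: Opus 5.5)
Your proposal is correct and follows the paper's own proof: you specialize Theorem~\ref{thm:density} to the certified-byte inventory with $M = 1$, so the buffer vanishes and an exit is anchored exactly when its byte is certified, and you then argue directly over the live states of the pending automaton of Lemma~\ref{lem:pending}, where runs of non-certified bytes in completely tokenizable inputs are exactly the non-certified paths between live states, giving a cycle test for unboundedness and a longest path otherwise. One point the paper states and your framing skips is that when no byte is certified the inventory is empty and $M$ is undefined, so the specialization is to the empty-inventory case of Theorem~\ref{thm:density}; your direct pending-automaton argument covers that case unchanged.
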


\begin{proof}
This is Theorem~\ref{thm:density} with the certified bytes as the inventory, or its empty-inventory case when no byte is
certified. With that inventory $M = 1$, the buffer is empty and an exit is credited exactly when its byte is certified.
The form of the statement over the pending automaton of Lemma~\ref{lem:pending} is the direct argument. A run avoiding
certified bytes can begin after any prefix, so every completable state starts one. Each such run follows transitions
labeled by non-certified bytes between completable states, and conversely every such path occurs in some completely
tokenizable input, between a prefix reaching its start state and a completion from its end state. Longest path in a
finite graph is computable, and unboundedness is exactly the presence of a reachable cycle of these transitions.
\end{proof}

The worked example is the UTF-8 shape over byte classes, one ASCII class, three lead classes and one continuation class,
and its tokens are the four sequence shapes. Lemma~\ref{lem:interior} derives the code's well-known
self-synchronization, the certified bytes being exactly the ASCII and lead classes, and Corollary~\ref{cor:gap} computes
the synchronization distance three, recovering the bound implied by the UTF-8 byte forms of
RFC~3629~\citep{yergeau2003utf8}, which lists the self-synchronization property and the one-to-four-byte forms but not
the number three. The name is borrowed from code synchronization, where the delay of a code is counted in codewords and
assumes unique decipherability \citep{restivo1975synchronization}. The distance here is in bytes and is computed for a
token set that need not be a code.

This makes the distance a design-time metric. A proposed format's synchronization distance, computed from its
certified-byte inventory, bounds the length of every run of a completely tokenizable input that carries no certified
byte, a final run ending at the end of input included, so a reader joining a valid stream mid-way reads at most that
many bytes before it meets a certified byte or the end of the stream. The same bound holds after corruption whenever the
damaged stream remains completely tokenizable, and a stream damaged past that point is the subject of the recovery
companion, whose repair model rests on the same certified bytes. Inventory and distance are both computed before the
format ships.

% machine-check: applications.py, utf8_shape and certified_gap_bound.

% machine-checks: explore_certification.py (cutoff sweeps and the two truncations walked over
% every violation found, with data/wide-cutoff-sweep.txt the wider emission the release gate
% reproduces byte for byte),
% pending_automaton.py (product procedure against enumeration, 1,434 triples),
% regex_frontier.py (the online boundary, measured), offline_certification.py (verifier exact on
% 1,020 strings with every candidate marking tried; the two truncations over the unbounded-delay set),
% certified_inventory_regex.py (the corollary's search against bounded enumeration, 136 decisions
% equal across four regex sets; a miniature C-like inventory computed and its anchors splitting a
% generated corpus with chunked scans byte-identical to sequential),
% certified_gap_regex.py (the anchor-free span theorem: utf8 shape bound three reproduced, the miniature
% C-like grammar and the unbounded-delay set both decided unbounded, bounds held against direct
% measurement over generated corpora).

\section{Splitting theorems}\label{sec:splitting}

This section draws the consequences of a certificate, and Section~\ref{sec:measurements} measures them. A cut at
boundaries is exact, so a certified position is a sound cut. A tokenizability-preserving byte substitution moves
boundaries only strictly between the certified anchors witnessed in the unchanged bytes on either side of it. Cutting
before every occurrence of a delimiter is sound exactly when it occurs only at token starts, and cutting after every
occurrence exactly when it occurs only at token ends.

% machine-check: explorations/certification/applications.py, sweep_parallel_split
% (1,554 tokenizable strings exhaustively, chunked equal to sequential on all; corpus-scale
% demonstration in parallel_corpus_demo.py: 8,192 bytes, 2,376 anchors, 8-worker snap within 15
% bytes, segmentation identical).
\begin{lemma}[Cuts at boundaries are exact]\label{lem:cut}
Let $T$ be a token set, $x$ completely tokenizable under $T$, and $B$ a set of boundaries of $x$. Scanning each segment
of $x$ between consecutive members of $\{0\} \cup B \cup \{|x|\}$ independently yields segmentations whose concatenation
equals the sequential segmentation of $x$.
\end{lemma}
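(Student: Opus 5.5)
The plan is to reduce the statement to a single chunk and then combine Lemma~\ref{lem:memoryless} with Lemma~\ref{lem:confine}. List the cut points $\{0\} \cup B \cup \{|x|\}$ in increasing order as $c_0 = 0 < c_1 < \dots < c_m = |x|$, discarding duplicates and an empty last chunk if $|x|$ is itself in $B$. By hypothesis each $c_j$ with $j < m$ is a boundary of $x$, since $0$ is always a boundary of a nonempty completely tokenizable input and the members of $B$ are boundaries by assumption. It then suffices to show, for each $j$, that the scan of the chunk $x[c_j..c_{j+1})$ taken as an input of its own is complete and commits exactly the tokens the sequential scan of $x$ commits in $[c_j, c_{j+1})$. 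Concatenating over $j$ then gives the sequential segmentation, because those token lists partition the sequential one.

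For a fixed chunk, the first step applies Lemma~\ref{lem:memoryless} at the boundary $c_j$. The scan of the suffix $x[c_j..]$ commits exactly the sequential tokens from $c_j$ on. Lemma~\ref{lem:confine} then shows that every such token starting before $c_{j+1}$ ends at or before $c_{j+1}$, so these tokens tile $[c_j, c_{j+1})$ exactly. When $c_{j+1} = |x|$ this holds trivially.

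The remaining step, and the only real content, is to pass from the suffix $x[c_j..]$ to the truncated chunk. I would argue by induction along the chunk's sequential tokens, as in the tail truncation in the proof of Theorem~\ref{thm:cutoff}. At a position $d$ in the chunk, the sequential token $t$ at $d$ lies inside the chunk, so $t$ is a prefix of the chunk's remainder. Every token that is a prefix of the chunk's remainder is also a prefix of $x[d..]$, so deleting the bytes beyond $c_{j+1}$ cannot make a longer token match. Hence $t$ is still a longest match. For regular token sets, any equal-length competitor was also available in the full input, so the priority tie-break picks the same token. The chunk's scan therefore commits $t$, advances, and ends exactly at $c_{j+1}$ with the whole chunk consumed.

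I expect this truncation step to be the main obstacle. The subtlety is that a token whose commit is final only at end of input, the lookahead behaviour of Lemma~\ref{lem:pending}, might seem to behave differently once the chunk ends early. The argument avoids this because the scan is defined by longest-prefix matching on the remaining input, and shrinking the input only removes candidates that extended past the chunk. None of those candidates was the chosen token, since Lemma~\ref{lem:confine} keeps the chosen token inside the chunk.
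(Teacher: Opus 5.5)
Your proposal is correct and follows the paper's proof: fix a chunk, use Lemma~\ref{lem:confine} to keep each sequentially committed token inside it, and observe that every chunk-local candidate is also a candidate in $x$, so the same token wins, with the priority tie-break carrying over for regular sets as in Corollary~\ref{cor:splitregular}. Your preliminary appeal to Lemma~\ref{lem:memoryless} is harmless but not needed, since the paper runs the same induction directly in the coordinates of $x$, starting from the common boundary at the chunk's left end.
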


\begin{proof}
Fix consecutive cut points $u < v$ from $\{0\} \cup B \cup \{|x|\}$ and consider the chunk $y = x[u..v)$. We show by
induction along the chunk that the chunk-local scan of $y$ commits, position for position, the tokens the sequential
scan commits inside $[u, v)$.

In the coordinates of $x$, both scans stand at $u$, a boundary. Suppose both stand at a common position $c$ with
$u \le c < v$, a boundary of the sequential scan. The sequential scan commits the longest member $t$ of $T$ prefixing
$x[c..]$. The token $t$ ends at or before $v$, trivially when $v = |x|$, and by Lemma~\ref{lem:confine} applied to the
boundary $v$ when $v \in B$. Then $t$ is also a prefix of $y[c-u..]$ and is available to the chunk-local scan.
Conversely every candidate available to the chunk-local scan is a prefix of $x[c..]$ and hence available to the
sequential scan, so no chunk-local candidate is longer than $t$. For distinct literal tokens the longest match is
unique, so the two scans commit the same token and advance to the same position. The induction reaches $v$ exactly,
since the sequential boundaries inside $[u, v)$ partition the chunk, so the chunk-local scan consumes $y$ completely,
and adding $u$ to its boundaries gives the sequential boundaries in $[u, v)$. Concatenating over all chunks yields the
claim.
\end{proof}

\begin{theorem}[Certified positions split soundly]\label{thm:split}
Let $T$ be a token set, $x$ completely tokenizable under $T$, and $A$ a set of certified anchor positions of $x$.
Scanning each segment of $x$ between consecutive members of $\{0\} \cup A \cup \{|x|\}$ independently yields
segmentations whose concatenation equals the sequential segmentation of $x$.
\end{theorem}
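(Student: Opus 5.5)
The plan is to reduce the theorem to Lemma~\ref{lem:cut}. That lemma already shows that cutting at any set of boundaries of $x$ and scanning the chunks independently reproduces the sequential segmentation. So it suffices to show that every certified anchor position in $A$ is a boundary of the sequential segmentation of $x$. Once that holds, $A$ is a set of boundaries, and the conclusion is Lemma~\ref{lem:cut} applied with $B = A$.

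To establish the inclusion, fix an anchor $a \in A$. By definition it comes with a witnessing occurrence of either a certified byte or a certified window that forces a boundary at $a$.

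\begin{itemize}
\item \emph{Certified byte.} The occurrence sits at $a$ in the completely tokenizable input $x$. By the definition of certified byte, every occurrence of the byte in a completely tokenizable input begins a committed token, so $a$ is a boundary.
\item \emph{Certified window.} There is an occurrence of $W$ at some position $p$ with $a = p + o$. By the definition of certified window, the token covering the occurrence's final byte $p + |W| - 1$ begins exactly at $p + o$. The start of a committed token is by definition a boundary, so $a$ is a boundary.
\end{itemize}

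Positions $0$ and $|x|$ are added as cut points in both statements and need no argument. Duplicates between $A$ and $\{0,|x|\}$ are harmless, since the cut points are taken as a set.

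The only step needing care is that the definition of anchor really ties the witnessing occurrence to this particular input $x$, and that the quantifier ``every occurrence in every completely tokenizable input'' is instantiated at $x$ itself. This is immediate, because $x$ is assumed completely tokenizable. I expect no real obstacle beyond this bookkeeping.

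One remark for the regular case: Lemma~\ref{lem:cut} invoked uniqueness of the longest literal match. For regular token sets the same induction goes through with list-priority tie-breaking, because the chunk-local candidates of each length are exactly the sequential ones. That extension is presumably the content of Corollary~\ref{cor:splitregular}, and it is not needed for the literal statement here.
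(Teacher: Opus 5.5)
Your proposal is correct and matches the paper's proof: each anchor is a boundary of the sequential segmentation because its certificate's defining guarantee, instantiated at the completely tokenizable input $x$ itself, places a boundary there, so Lemma~\ref{lem:cut} applies with $B = A$. Your case split into certified bytes and certified windows only spells out that one-line instantiation, and your remark on the regular case correctly anticipates Corollary~\ref{cor:splitregular}.
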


\begin{proof}
Every anchor $a \in A$ is a boundary of the sequential segmentation, by the defining guarantee of its certificate
applied to the tokenizable input $x$ itself, so Lemma~\ref{lem:cut} applies.
\end{proof}

Certification establishes from the token set alone that every occurrence of a certified window in a completely
tokenizable input places a boundary at its origin, so a cut set drawn from a certified inventory is exact on every
completely tokenizable input. The preceding results are stated for literal tokens, and the next corollary extends them
to the regular token sets defined in Section~\ref{sec:setting}.

% machine-check: applications.py, sweep_parallel_split, and parallel_corpus_demo.py at scale.

\begin{corollary}[Splitting at the regular generality]\label{cor:splitregular}
Lemmas~\ref{lem:memoryless}, \ref{lem:confine} and \ref{lem:cut} and Theorem~\ref{thm:split} hold verbatim for regular
token sets, so a cut at boundaries is exact there too, and a certified position splits soundly.
\end{corollary}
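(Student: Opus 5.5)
I would prove the corollary by revisiting each of the four proofs and locating where literal tokens were used. For Lemma~\ref{lem:memoryless}, the regular scan's choice at position $c$ is still a function of $x[c..]$ alone. It commits a longest word of the union of the token languages that prefixes $x[c..]$, and it breaks equal-length ties by list priority, which is a fixed datum of $T$ and not of the position. The scan reaches a boundary $b$ carrying only the position as state, so the original argument goes through unchanged. Lemma~\ref{lem:confine} uses only that committed tokens partition $x$. That holds for any deterministic scan that consumes $x$ whole, and the empty-string-free convention guarantees that each commit advances, so the same one-line proof applies.

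The substantive step is Lemma~\ref{lem:cut}. Its proof used literal tokens in exactly one sentence, where uniqueness of the longest match made both scans commit the same token. For regular sets I would replace that sentence as follows. Fix a common boundary $c \in [u, v)$, and let $t$ be the word the sequential scan commits, of length $\ell$. By Lemma~\ref{lem:confine}, or trivially when $v = |x|$, $t$ ends at or before $v$, so $t$ prefixes $y[c-u..]$. Every word of any token language that prefixes $y[c-u..]$ also prefixes $x[c..]$, so the chunk-local maximum length equals $\ell$.

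Among words of length $\ell$, both scans now face the same string $x[c..c+\ell) = y[c-u..c-u+\ell)$. Each language either contains it or does not, so the set of matching languages is identical in the two scans. The priority rule therefore selects the same language and commits the same boundary. The positional induction then proceeds verbatim, reaching $v$ exactly and concatenating over the chunks.

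I expect this tie-breaking point to be the only real obstacle. The paper treats a commit as a boundary plus a chosen language, so I must check that "same token" means the same length and the same winning language. The argument just given settles that, since both quantities are determined by the shared string and the fixed priority list.

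Finally, Theorem~\ref{thm:split} needs only two inputs. The first is that each certified anchor is a sequential boundary of $x$, which is the definition of certification applied to $x$ itself and does not depend on the form of the token set. The second is the regular Lemma~\ref{lem:cut} just established. The corollary's second sentence then restates these two conclusions.
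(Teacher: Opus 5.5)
Your proposal is correct and follows the paper's own proof. Both locate the single use of literal uniqueness in Lemma~\ref{lem:cut} and replace it with the observation that the chunk-local and sequential scans see the same family of longest candidates ending at or before the cut, and break equal-length ties by the same list priority, so they commit the same token and not merely the same length. The paper phrases this through three abstract properties of the scan (suffix dependence, partition, and best-candidate choice under a fixed preference) and adds a counterexample showing the third is not implied by the first two, but the substance is the same as yours.
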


\begin{proof}
The proofs of the three lemmas use only three properties of the scan, that its choice at a position depends only on the
suffix from that position, that its committed tokens partition the input, and that this choice is the best of the token
members prefixing the remaining input under a fixed preference. Shortening that remainder can remove a candidate but
cannot displace a winner that still fits, and the chunk-local step relies on this.

The third property is not implied by the first two. A rule that commits the shortest match when exactly three bytes
remain, and the longest match otherwise, satisfies the first two, yet its chunk-local scan disagrees with its sequential
one at a cut that is a boundary of the input, over $\{\code{a}, \code{aa}\}$ on \code{aaaaa} cut at $3$. All three hold
for a regular token set as the Setting defines it, the priority order making the choice single-valued, so the arguments
transfer unchanged, and Theorem~\ref{thm:split} follows from Lemma~\ref{lem:cut} as before.

Tie-breaking needs explicit attention because several token languages may match the same longest prefix. The two scans
see the identical family of candidates ending at or before the cut, longest ones included, and resolve a tie among
equal-length candidates by the same list-priority rule, so they commit the same token and not merely the same length.
\end{proof}

For a single-byte substitution at position $p$ and a common length bound $L$ on the token languages, the edit's
lookahead shadow is $(p - L,\, p]$, the start positions whose token choice can depend on the edited byte. A choice made
at or below $p - L$ depends only on unchanged bytes, so no anchor is needed to bound the damage on the left. The first
boundary past $p - L$ is $0$ or ends a token committed at or below $p - L$, so both inputs share it and the damage lies
above $p - L + 1$. Regularity plays no part in this bound, since a common length bound on the token languages is all the
argument uses, and for a regular token set whose languages hold words of unbounded length the argument supplies no such
bound. On either side, a certified anchor witnessed in bytes the edit left untouched also bounds the damage, since its
certificate makes it a boundary of both inputs.

% machine-check: applications.py, sweep_incremental_bound (1,346 tokenizable single-byte edits,
% every moved boundary inside the stated window); sweep_edit_between_anchors (11,926 tokenizable
% single-byte substitutions across five regular token sets, inventories from the emptiness
% decision, every moved boundary strictly between the flanking anchors, and on the 8,046 over the
% three sets whose words are bounded none at or below the lookahead shadow's first position).
\begin{theorem}[Edit damage is confined between witnessed anchors]\label{thm:edit}
Let $T$ be a regular token set, $x$ completely tokenizable under $T$, and $x'$ a single-byte substitution at position
$p$ that is completely tokenizable under $T$. Let $a$ be a certified anchor position of $x$ whose witnessing occurrence
lies wholly in $[0, p)$, or $a = 0$, and $q$ a certified anchor position of $x$ whose witnessing occurrence lies wholly
in $[p+1,\, |x|)$, or $q = |x|$. Every boundary present in exactly one of the two segmentations lies strictly between
$a$ and $q$. When an integer $L$ bounds the length of every word of every token language of $T$, and in particular when
$T$ is a literal token set with maximum token length $L$, every such boundary lies in $(\max(a,\, p - L + 1),\, q)$.
\end{theorem}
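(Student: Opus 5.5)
The plan is to show that both flanking positions are boundaries of \emph{both} segmentations, and then to invoke Lemma~\ref{lem:cut} in its regular form, Corollary~\ref{cor:splitregular}, on the two inputs separately. For the left flank, if $a = 0$ there is nothing to prove. Otherwise the witnessing occurrence of $a$'s certified byte or window lies wholly in $[0,p)$, so the same bytes occur at the same place in $x'$. Since $x'$ is completely tokenizable by hypothesis, the defining guarantee of the certificate applies to $x'$ and places a boundary of $x'$ at the same origin $a$. The right flank is symmetric: the witness lies in $[p+1,|x|)$, which the substitution leaves untouched, and $q = |x|$ is trivially common. Because an origin lies inside its witnessing occurrence, we get $a < p < q$ whenever the anchors are genuine, so the cut set $\{0, a, q, |x|\}$ is well ordered.

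Next apply Lemma~\ref{lem:cut} (via Corollary~\ref{cor:splitregular}) to $x$ and to $x'$ with $B = \{a, q\}$. The chunks $x[0..a)$ and $x'[0..a)$ are the same byte string, so their independent scans coincide. The same holds for $x[q..|x|)$ and $x'[q..|x|)$. Hence the two segmentations agree on every boundary in $[0,a]$ and in $[q,|x|]$, the endpoints included, because $a$ and $q$ are boundaries of both inputs. Any boundary present in exactly one of the segmentations therefore lies strictly inside $(a,q)$.

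For the length-bounded refinement I would argue directly, without anchors. When every token word has length at most $L$, the choice at a position $c$ depends only on $x[c..\min(c+L,|x|))$: the candidate set is the set of token words prefixing that window, ties are broken by list priority, and $|x| = |x'|$. If $c \le p - L$, this window avoids $p$, so the two scans make the same choice at $c$. Induct along the common boundaries starting from $0$. The scans agree step by step while the current boundary is at most $p-L$. Let $b$ be the first boundary exceeding $p-L$. It is either $0$, or the end of a token committed at some $c \le p-L$, so it is common to both scans, and so is every boundary below it. Since $b \ge p-L+1$, every differing boundary exceeds $p-L+1$. Combining this with the first part yields the interval $(\max(a, p-L+1), q)$.

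The main obstacle I expect is the step ``the choice at $c$ depends only on $L$ bytes'' in the regular setting. One has to check that a longest match cannot be longer than $L$, and that the priority tie-break reads no byte past the window. Both follow from the uniform length bound on the languages, but this is exactly where the unbounded case fails, so the statement should be careful to use the bound only there. A secondary subtlety is that certification is applied to $x'$ and not only to $x$. That is legitimate only because the theorem assumes $x'$ is completely tokenizable, and I would flag that hypothesis at that point in the proof.
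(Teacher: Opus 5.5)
Your proposal is correct and matches the paper's proof: each flank becomes a boundary of both inputs because the certificate applies to its unchanged witnessing occurrence in each completely tokenizable input, the shared prefix and suffix then force agreement outside $(a,q)$, and the length-bounded refinement is the same induction over common boundaries up to the first one exceeding $p-L$. The only difference is cosmetic: you get the high side from Lemma~\ref{lem:cut} with the cut set $\{a,q\}$, where the paper applies Lemma~\ref{lem:memoryless} to the suffix from $q$ (when $q=|x|$, drop it from $B$, since it is already in the cut set and is not a token start).
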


\begin{proof}
Low side, from the anchor. When $a = 0$ it is a boundary of both segmentations by definition and nothing lies below it.
Otherwise the occurrence behind $a$ lies in bytes the two inputs share, so its certificate makes $a$ a boundary of both
segmentations. By Corollary~\ref{cor:splitregular} applied to each input with the single cut $a$, each segmentation
restricted to $[0, a)$ is the chunk-local scan of its own first $a$ bytes, and those byte ranges are identical, so the
two segmentations agree on every boundary below $a$, and both carry $a$ itself.

Low side, from the shadow. With $L$ the common length bound on the token languages, the two inputs agree on $[0, p)$,
and we show that the two scans commit identical tokens at every common boundary $c \le p - L$. The choice at $c$ is
decided by the words of length at most $L$ that prefix the remaining input, all inside $[c, c + L) \subseteq [0, p)$,
where the inputs agree, so the same words match, the same one wins under the same priority, and the committed token and
the next boundary coincide. By induction from $0$, the boundary sequences of $x$ and $x'$ are identical up to and
including the first boundary $d$ exceeding $p - L$. Both scans therefore carry $d$, so their boundaries in $[0, d]$
coincide, and $d \ge p - L + 1$. Hence no boundary at or below $p - L + 1$ belongs to exactly one segmentation, and with
the previous paragraph none at or below $\max(a,\, p - L + 1)$ does.

High side. If $q = |x|$, no boundary lies at or beyond $q$ and nothing is claimed there. Otherwise the occurrence behind
$q$ lies in the unchanged suffix $[p+1,\, |x|)$, so it is present in both inputs, and both are completely tokenizable.
The certificate therefore makes $q$ a boundary of both segmentations. By Lemma~\ref{lem:memoryless}, which
Corollary~\ref{cor:splitregular} carries to regular sets, each segmentation from $q$ on is the segmentation of the
suffix from $q$. The two suffixes are byte-identical, so the segmentations agree on every boundary at or beyond $q$.
Every boundary present in exactly one of the two segmentations therefore lies in $(a, q)$, and under the length bound in
$(\max(a,\, p - L + 1),\, q)$.
\end{proof}

The greatest such $a$ and the least such $q$ give the tightest interval. Any qualifying pair gives a valid one, and the
budget-limited inventories of Section~\ref{sec:measurements} report such a pair.

% machine-check: applications.py, ingestion_demo; the refuting half is the token-interior window
% lemma of the decidability note.
\begin{theorem}[The delimiter-splitting criteria, a dual pair]\label{thm:delimiter}
Let $T$ be a regular token set and $\beta$ a byte of its alphabet. Splitting every completely tokenizable input
\emph{at} each occurrence of $\beta$ and scanning the pieces independently preserves the segmentation if and only if
$\beta$ occurs in tokens only at position zero. Splitting \emph{after} each occurrence preserves it if and only if
$\beta$ occurs in tokens only at final positions.
\end{theorem}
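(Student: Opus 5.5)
The plan is to prove the two criteria separately, each by a soundness direction that reduces to Lemma~\ref{lem:cut} (in its regular form, Corollary~\ref{cor:splitregular}) and a refutation direction that uses a single token as the counterexample input. Both directions rest on one observation from the proof of Lemma~\ref{lem:interior}. In a completely tokenizable input, every occurrence of $\beta$ lies inside some committed token, at the offset it has in that token's word. Conversely, every word $t$ of a token language is, on its own, a completely tokenizable input. Its segmentation is the single token $t$, whatever the priority, because no longer token is a prefix of $t$.

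\emph{Cutting at $\beta$.} For the soundness direction, suppose every occurrence of $\beta$ in every word of every token language is at offset zero. By Lemma~\ref{lem:interior}, $\beta$ is then certified, so in a completely tokenizable $x$ every occurrence position of $\beta$ is a boundary. Applying Lemma~\ref{lem:cut} through Corollary~\ref{cor:splitregular} with $B$ the set of those positions gives the claim. I would not route this through Theorem~\ref{thm:split}, because Lemma~\ref{lem:cut} needs only that the cut points are boundaries.

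For the refutation direction, suppose some word $t$ carries $\beta$ at offset $k>0$. Take $x=t$, which is completely tokenizable with the single boundary $0$. Cutting at $k$ produces pieces $t[0..k)$ and $t[k..)$, which are nonempty because $0<k<|t|$. Each piece is either not completely tokenizable, in which case the split scan fails, or it is scanned into tokens whose boundaries include $k$. In both cases the concatenated result differs from the sequential segmentation $\langle t\rangle$, since that segmentation has no boundary at $k$.

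\emph{Cutting after $\beta$.} This case is the mirror image. Suppose every occurrence of $\beta$ in a token word is at that word's final offset. Then in a completely tokenizable $x$, an occurrence at position $i$ is the last byte of its committed token, so $i+1$ is either a boundary or equal to $|x|$. The cut set $\{i+1\}\setminus\{|x|\}$ therefore consists of boundaries, and Lemma~\ref{lem:cut} applies again.

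For the refutation direction, suppose some word $t$ carries $\beta$ at offset $k<|t|-1$. Take $x=t$ again. The cut at $k+1$ lies strictly inside $(0,|t|)$, and the same dichotomy as before shows that the segmentation is not preserved.

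The main obstacle is one of precision rather than depth: making ``preserves the segmentation'' exact, so that the refutation covers both failure modes. A piece may fail to scan, or it may scan but introduce a spurious boundary. I would handle this by comparing boundary sets, with a failed scan counting as no segmentation at all. I would also check that the phrase ``occurs in tokens'' in the statement ranges over words of the token languages, as in Lemma~\ref{lem:interior}. This matters because for regular sets the refuting input must be a single word, and for it no priority question arises.
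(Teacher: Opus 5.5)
Your proposal is correct and follows the paper's proof essentially step for step. Soundness in both directions goes through Lemma~\ref{lem:interior} and Lemma~\ref{lem:cut} at the regular generality; the paper routes the at-direction through Theorem~\ref{thm:split}, which comes to the same thing. Each converse takes a single token word as the refuting input and argues that a piece either fails to scan or introduces a boundary the sequential segmentation lacks.

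One small slip: when $\beta$ recurs in $t$, cutting at every occurrence of $\beta$ yields more pieces than $t[0..k)$ and $t[k..)$. Your dichotomy still goes through because the cut set contains $k$, and the paper's proof notes this explicitly.
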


\begin{proof}
At-splitting. If $\beta$ occurs in tokens only at position zero, then $\beta$ is a certified byte by
Lemma~\ref{lem:interior}, since every occurrence in a tokenizable input lies inside some committed token at the offset
it has in that token, and every such offset is zero. Theorem~\ref{thm:split}, carried to regular sets by
Corollary~\ref{cor:splitregular}, then gives preservation. Conversely, if some token $t$ carries $\beta$ at offset
$j > 0$, the input $t$ alone is completely tokenizable as the single token $t$, whatever the priority, since no longer
token can be a prefix of $t$. Cutting at every occurrence of $\beta$ in $t$ produces chunks whose cut points include
$j$, and there may be more than two of them when $t$ carries $\beta$ more than once. If any chunk fails to scan, the
split scan fails on an input the sequential scan consumes. If every chunk scans, the concatenated chunk segmentation
contains a boundary at $j$, a token start of its own chunk, that the sequential segmentation of the single token $t$
lacks. Thus preservation fails whether or not all the chunks scan.

After-splitting. If $\beta$ occurs in tokens only at final positions, every occurrence in a tokenizable input is the
last byte of its committed token, so the position after it starts the next token and is a boundary, except when the
occurrence is the input's last byte, where that position is $|x|$ itself. That endpoint is not a token start, but the
cut set of Lemma~\ref{lem:cut} already contains it and it opens no chunk. The lemma, at the regular generality of
Corollary~\ref{cor:splitregular}, then gives preservation. Conversely, if some token $t$ carries $\beta$ at a non-final
offset $j$, the single-token input $t$ is cut at $j + 1$, mid-token, and fails preservation as before.
\end{proof}

% machine-check: applications.py, ingestion_demo and sweep_after_split_criterion (198 pairs, both
% directions exhaustive; the first conjectured criterion was refuted by the sweep and replaced by
% this dual form).

\section{The differential auditor}\label{sec:diff}

This section compares two token sets with the verifier of Theorem~\ref{thm:offline}. Two such verifiers over a common
input decide whether the sets ever place a boundary differently, and a determinized projection of each decides whether
they ever disagree on whether an input tokenizes at all.
\begin{theorem}[Segmentation differentials are decidable for regular token sets]\label{thm:diff}
Given regular token sets $T_1$ and $T_2$, it is decidable whether some input completely tokenizable under both receives
different boundary sets, and likewise whether some input is completely tokenizable under exactly one of the two. In each
case a differing input is computable when one exists.
\end{theorem}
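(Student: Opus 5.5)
The plan reduces both questions to emptiness of regular languages built from the armed-run automata of Theorem~\ref{thm:offline}, one for $T_1$ and one for $T_2$. Write $G_1$ and $G_2$ for the regular languages of marked inputs they accept. By Theorem~\ref{thm:offline}, $G_i$ contains exactly the pairs $(x, m)$ where $x$ is completely tokenizable under $T_i$ and $m$ is the boundary projection of its greedy segmentation. Because the greedy segmentation is single-valued, each tokenizable $x$ has exactly one accepted marking. I will encode a pair of markings over a common input as a word over the byte alphabet extended by two mark bits per position, which also keeps the alphabet finite.

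\emph{Boundary differential.} Run the two verifiers in product on one byte track carrying two mark tracks, the first read by the $T_1$ verifier and the second by the $T_2$ verifier. The product accepts when both verifiers accept. A further bit records whether some position carries differing marks. Its accepted language is the set of inputs tokenizable under both whose greedy boundary sets differ, since each track is forced to be the unique greedy marking. This holds because boundary sets are exactly the marked positions, and position zero is marked in both whenever the input is nonempty. The state space is a product of finite spaces, so a breadth-first reachability search decides emptiness and returns a shortest accepted word. Projecting that word to its byte track gives the differing input, and the two mark tracks name the disagreeing boundary as a bonus.

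\emph{Domain differential.} The domain of $T_i$ is the byte projection $D_i$ of $G_i$. It is regular, being a homomorphic image of a regular language, and subset construction gives a deterministic automaton for it, the determinized projection mentioned in the section preamble. Complement one deterministic automaton, intersect with the other, and do this in both orders. This decides whether $D_1 \triangle D_2$ is nonempty, and the same search yields a witness, tagged with which set tokenizes it. The empty input needs no special handling, because both verifiers accept it through their distinct initial states, so it lies in both domains and never witnesses a domain difference.

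The step needing most care is the first one, the claim that accepting runs of the two-track product correspond exactly to inputs in the common domain with differing boundaries. That claim rests on uniqueness of the accepted marking, since an arbitrary accepted marking would not otherwise be guaranteed to be the greedy one. Soundness and completeness in Theorem~\ref{thm:offline} supply this uniqueness, because any accepted marking is the greedy segmentation's projection. The only remaining cost concern is the subset construction in the domain case, which may be exponential but is finite, and that suffices for decidability.
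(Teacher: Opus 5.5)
Your proposal is correct and follows the paper's proof: the boundary differential is the product of the two armed-run verifiers with a disagreement flag, whose soundness and completeness force each mark track to be that side's greedy marking, and the domain differential is the subset determinization of each verifier's byte projection, with a witness read off the search. Complementing one determinized automaton and intersecting with the other, in both orders, is the same as the paper's search of the product of the two subset automata for a state accepting on exactly one side.
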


\begin{proof}
Run both of Theorem~\ref{thm:offline}'s armed-run verifiers in product over a common input, each side
nondeterministically guessing its own marking, with a flag set when a position is marked by exactly one side. The armed
pruning confines each side's guesses to its greedy segmentation. Any other guess is rejected by an armed accept or an
unclosable segment, so a run that survives feeds the two true boundary sets position by position. Accept when the input
can end, both sides closing a final segment, with the flag set. Such a run exhibits an input both sets tokenize whose
boundary sets differ, and conversely any such input yields an accepting run by guessing both true markings, the flag set
at the first boundary they dispute. The product state, two derivative-set pairs and one flag, is finite, so the search
terminates, and reconstructing the fed bytes from an accepting run yields the differing input.

The second predicate needs a different construction, because a nondeterministic pair product cannot establish
non-membership on one side. A run failing to accept is not a proof that no run accepts. Project each side's verifier
onto bytes by erasing the mark bits, so that it accepts a byte string whenever some marking of it is accepted, which is
exactly when that side tokenizes the string completely. That projection is a nondeterministic byte automaton, and
determinizing it by the subset construction gives a deterministic one whose state is the set of marking configurations
still alive after the bytes read. Non-membership is then decidable position by position, since the input read so far is
outside the side's domain exactly when no configuration in the current subset is accepting at the current end of input.
Search the product of the two subset automata for a state accepting on exactly one side. The state space is finite,
being pairs of subsets, so the search terminates, and parent pointers reconstruct the differing input.
\end{proof}

The auditor decides two predicates and returns a differing input whenever either holds. The first is boundary
disagreement on inputs both sets tokenize completely, from the product of the two armed-run verifiers, and the second is
domain difference, an input that exactly one side tokenizes at all, from the product of their determinized byte
projections.

The two bear on security, since request smuggling, polyglot inputs, and log mangling arise when two implementations read
one stream differently. At the lexical layer the auditor either proves a pair of token sets free of both differentials
or produces the diverging input. Neither predicate observes token identity where the boundaries coincide, or priority
ties among them, and that comparison is outside this instrument.

% machine-check: campaign_inventory.py, the differential emissions of
% data/campaign-splitting-stats.txt, witness re-verified against the reference scans;
% differential_regex.py, the domain decision beside the boundary decision, 42 ordered pairs each
% against brute force with every witness re-verified.
The auditor also decides whether the recovery campaign's change of whitespace treatment alters segmentation. The
conventional and split-friendly grammars of Section~\ref{sec:measurements} differ only in their whitespace treatment,
and the auditor synthesizes a shortest disagreement, the differing input \code{SN}, a space then a newline, one
whitespace token under the conventional row with boundary set $\{0\}$ and two tokens under the split-friendly row with
boundary set $\{0, 1\}$. The conventional row against itself is proved differential-free over every input, a guarantee
that no test on a finite corpus can establish.

% machine-checks: finitely_verifiable.py, generic_full_equivalence_sweep (56 ordered literal pairs; domain verdicts
% agree with enumeration through length seven; returned domain and boundary witnesses re-verified by independent
% greedy scans); differential_regex.py (regular sets, 42 ordered pairs including the unbounded-delay shape).

Table~\ref{tab:cost} gives upper bounds on the state spaces these decisions search, each derived from the proof that
introduces its construction, and makes no claim about running time. The literal product of Theorem~\ref{thm:cutoff}
carries a pending state, the occurrence match's progress through the window, and the violation phase, and its tracker
offset ranges to $L$. Over $T = \{a^{L}\}$, the search on the window $(a, 0)$ discovers $1 + L(L + 1)/2$ product states
before refuting it, for every $L \ge 2$, including 37 at $L = 8$ and 8{,}257 at $L = 128$. The count is quadratic in $L$
while $|P(T)| \cdot h$ is $L$, so $O(|P(T)| \cdot h)$ does not bound the search. The literal route also admits plain
enumeration to length $N = h + 2L - 2$, which the counterexample of Section~\ref{sec:deciding} shows is attained.

The armed-run product of Theorem~\ref{thm:offline} and Corollary~\ref{cor:inventory} carries a verifier state and the
same occurrence progress, once per pair. The inventory rows separate three quantities. The count of certification
decisions is exact, one for each (window, origin) pair of width at most $H$. The peak bound is the single-decision bound
evaluated at $h = H$. The aggregate bound sums the bounds over all pairs, using $h = j$ for each width-$j$ window.

The credit product of Theorem~\ref{thm:density} adds the byte buffer and its flag vector. Its buffer-and-flag factor is
written as a sum so that it equals one for an empty alphabet, where the product keeps its start state. The quantity $M$
is defined only for a nonempty inventory, so the empty inventory has its own row. That case is decided by testing
whether the token-language union is empty, before any credit product is built. If it is, only the empty input tokenizes
and the span is $0$. Otherwise tokenizable inputs grow without bound, and so does the span. Among the table's
constructions, only domain difference, the second predicate of Theorem~\ref{thm:diff}, uses subset determinization,
because it must establish non-membership on one side.

\begin{table}[!htbp]
\centering
\begin{tabular}{@{}L{0.3\textwidth}L{0.27\textwidth}L{0.36\textwidth}@{}}
\toprule
Decision & Construction & Search-space bound \\
\midrule
Window certification, literal & pending automaton, occurrence, phase, tracker & $O(|P(T)| \cdot h \cdot L)$ \\
\addlinespace
Window certification, regular & armed-run product, per $(W, o)$ & $O(|V| \cdot h)$ \\
\addlinespace
Anchor inventory through budget $H$, literal & $\sum_{j=1}^{H} j\,|\Sigma|^{j}$ certification decisions, one per
pair & peak $O(|P(T)| \cdot H \cdot L)$, aggregate $O(|P(T)| \cdot L \sum_{j=1}^{H} j^{2} |\Sigma|^{j})$ \\
\addlinespace
Anchor inventory through budget $H$, regular & the same count of decisions, each by the regular construction & peak
$O(|V| \cdot H)$, aggregate $O(|V| \sum_{j=1}^{H} j^{2} |\Sigma|^{j})$ \\
\addlinespace
Anchor-free span, nonempty inventory & credit product & $O(|V| \sum_{k=0}^{M-1} (2|\Sigma|)^{k})$ \\
\addlinespace
Anchor-free span, empty inventory & emptiness of the token-language union & the number of reachable states of the union
automaton \\
\addlinespace
Boundary differential & two verifiers in product, one flag & $O(|V_{1}| \cdot |V_{2}|)$ \\
\addlinespace
Domain difference & subset determinization of each projection & $O(2^{|V_{1}|} \cdot 2^{|V_{2}|})$ \\
\bottomrule
\end{tabular}
\caption{Upper bounds on the state spaces searched by the decisions of Sections~\ref{sec:deciding} and~\ref{sec:diff},
read off their constructions. For a literal token set, $P(T)$ is the set of proper token prefixes, the empty prefix
included, so that $|P(T)|$ counts the pending automaton's live states, and $L$ is the maximum token length. For a
regular family, $|V|$ is the state count of an armed-run verifier $V$, and $V_{1}$ and $V_{2}$ are the verifiers of the
two compared families. The alphabet is $\Sigma$, $h$ is the window length, $H$ the declared window budget and $M$ the
maximum window length in a nonempty inventory.}
\label{tab:cost}
\end{table}

\section{Measurements}\label{sec:measurements}
% Every number in this section is emitted by a named program into its file under data/ or derived by
% scripts/check-paper4.sh from those emissions; the tables input directly from the fragments the
% programs write, and the gate holds every prose number to its source the way the companion's gates
% hold its figures.

This section measures, on named material, how many anchors certificates supply, how a frozen inventory transfers, how
far an edit's damage reaches, and what the four campaign grammars offer. It applies the decisions of
Section~\ref{sec:deciding}, the splitting and edit theorems of Section~\ref{sec:splitting}, and the auditor of
Section~\ref{sec:diff}.

Anchor supply is defined for every token set. The vocabulary study needs literal token sets with hundreds to thousands
of strings sharing prefixes. Both byte-pair vocabularies named below meet this requirement, unlike the lexer grammars in
this study and in the recovery companion's measurement campaign. The locally trained vocabulary also comes with its
training text and an adjacent held-out slice. The campaign's four C-like grammars, measured at the end of this section,
are another kind of token set to which the decisions apply.

The first study reads two vocabularies over a held-out byte slice of the simdjson benchmark fixture. The program reads
the fixture through a one-to-one mapping from bytes to characters, so every quantity is a byte quantity and every slice,
window and substitution acts on bytes, never on decoded code points. That fixture is \code{twitter.json}, taken verbatim
at $631{,}515$ bytes and byte-identical to \code{jsonexamples/\allowbreak twitter.json} at that project's release
v3.10.1~\citep{simdjsondata}. It is a published benchmark fixture and was not collected for this work. The artifact
fetches it from that release, holds it to its digest and does not redistribute it, and it is used here only as a byte
stream. The JSON was not parsed or semantically inspected. The vocabularies are local-384, a byte-pair vocabulary of 384
merges trained here on the fixture's first 64 KiB over the full 256-symbol fallback base, and GPT2-prefix-4k, the first
4{,}096 merges of the GPT-2 byte-level vocabulary over its full byte alphabet, nothing pruned to the sample.

The vocabularies' token strings come from BPE merge tables, which native BPE applies by rank, while these studies ignore
the ranks and scan the strings by maximal munch, the scan the theorems speak about. Every figure below describes that
policy over those strings and measures nothing of BPE's own segmentation. The local training merges the most frequent
adjacent pair at each step, resolving ties by the shorter concatenation and then by the lexicographically greatest pair.
This is not the first-occurrence rule, and the difference identifies the vocabulary, since $251$ of the $384$ steps had
a most-frequent tie and $185$ of those would have merged a different pair under first occurrence.

The evaluation slice is the byte range $[65536, 81920)$, exactly 16 KiB, adjacent to the training bytes. For each
vocabulary, the inventory contains all certified bytes and every certified (window, origin) pair of widths two through
four whose window occurs in that slice. Every alphabet byte is decided by the interior-byte lemma and cross-checked
against the window decider, and every candidate window is decided at every origin. Because the candidates are drawn from
the slice the supply is then measured on, these rows report a ceiling measured after the fact, exact through the budget
on this slice, and not the supply an inventory frozen before the input would deliver. The two are separated below.
Table~\ref{tab:supply} gives the supply, counting distinct internal anchor positions, each once however many
certificates witness it, together with the gaps between consecutive anchors and the snap, the distance an ideal
eight-way cut moves to reach its nearest anchor.

\begin{table}[!htbp]
\centering
% Generated by splitting_measurements.py; edit the program, never this file.
\begin{tabular}{@{}lrrrrr@{}}
\toprule
Vocabulary & anchors/KiB & gap p50 & gap p90 & gap max & snap max (8) \\
\midrule
local-384 & 242.1 & 1 & 14 & 61 & 9 \\
GPT2-prefix-4k & 651.2 & 1 & 2 & 11 & 1 \\
\bottomrule
\end{tabular}

\caption{Certified-anchor supply of the two vocabularies on the evaluation slice, emitted by
\code{splitting\_measurements.py}. Gaps and snap distances are in bytes, and p50 and p90 denote the median and the
ninetieth percentile. The snap max (8) column gives the largest distance any of the seven ideal cuts of an eight-way
split moves to reach its nearest anchor. Gap statistics here are distances between consecutive anchor positions, so an
interior gap of $d$ bytes holds $d - 1$ anchor-free positions, and the runs before the first anchor and after the last
are excluded. The longest anchor-free stretch reported in the text also counts those runs.}
\label{tab:supply}
\end{table}
\FloatBarrier

Both vocabularies supply many anchors, and the external one supplies more. The trained vocabulary yields 242.1 anchors
per KiB and the external one 651.2, a ratio of about 2.7 on this slice. The trained vocabulary also has the longer tail,
its longest anchor-free stretch running to 60 bytes against the external one's 10 and its eight-way cuts snapping within
9 bytes against 1, while the two agree at the median gap of one byte. Position by position, 2{,}937 positions are
anchors under both vocabularies, 937 under the trained one alone, 7{,}482 under the external one alone, and 5{,}027
under neither, of the slice's 16{,}383 interior positions, a Jaccard index of 0.259 between the two anchor sets. This
difference in supply is consistent with frequent evaluation substrings having become token interiors under training, but
the two vocabularies differ in provenance, merge depth, training corpus, and construction at once, so the comparison
cannot identify that mechanism.

Merge depth, one of those four differences, can be varied alone. The local depth series holds the training corpus, the
evaluation slice, the fallback base, the scan convention, and the window budget fixed, and changes only the number of
merges. Supply falls at every step, 508.6 anchors per KiB at 48 merges, 420.6 at 96, 329.4 at 192, and 242.1 at 384, a
factor of $2.10$ across that range, while the longest anchor-free stretch grows from 21 bytes to 60. GPT2-prefix-4k lies
an order of magnitude deeper, at 4{,}096 merges, and differs from all four checkpoints in provenance, training data, and
construction, so the series cannot measure a depth contribution to the cross-vocabulary gap. It does rule out an account
in which supply depends on depth alone and falls with it, since the deeper vocabulary supplies 2.7 times as many
anchors; separating depth from the other axes would need matched depths across several training domains.

A slice-wide average hides variation between blocks. Across the sixteen kibibyte blocks the slice divides into, the
trained vocabulary supplies between 131 and 448 anchors per block with a median of 242, and the external one between 537
and 756 with a median of 688. Each anchor is detected once on the full slice and binned by its position. Percentiles
here and throughout this section take, from $n$ observations sorted in ascending order and for the fraction $\alpha$,
the entry at zero-based index $\min(n - 1,\, \lfloor \alpha n \rfloor)$. At $\alpha = 1/2$ this is the upper median when
$n$ is even. The ranges do not meet, which supports the comparison above, and each is wide enough that a single
aggregate would hide it.

The inventory is exhaustive through the declared window budget. Every window of widths two through four occurring
anywhere in the slice is decided at every origin, 20{,}690 decisions over 6{,}307 windows for each vocabulary, and the
budget curve reports the supply gained at each width. Through $H = 1$, the certified bytes alone, the trained vocabulary
supplies 52.2 anchors per KiB and the external one 379.8, while widths two, three, and four carry the trained one to
208.9, 230.7, and 242.1, and the external one to 608.8, 620.9, and 651.2. Each width's gain can be set against the
number of decisions it took, which is the cost of widening the budget. Width two carries most of the gain, 82.5 and 84.4
percent of what widths two to four add beyond the certified bytes, for the trained and the external vocabulary in turn,
and it gains 2{,}507 anchor positions on the trained vocabulary and 3{,}665 on the external one for $2{,}316$ decisions.
Width three gains 349 and 193 for $6{,}666$ decisions, and width four gains 183 and 485 for $11{,}708$. The curve has
not flattened at the budget, and on the external GPT-2 vocabulary width four gains more than width three did, so what
lies past $H = 4$ is not estimated here.

The rows above are, then, a ceiling, the supply an inventory chosen with the input already in hand could certify through
the budget. A deployment fixes its inventory first and meets the input afterwards, and the difference is measurable. The
frozen inventory is a hybrid of every certified byte, which the interior-byte lemma settles from the token set alone,
together with every certified window of width two to four that occurs in the adjacent 16 KiB, the byte range
$[81920, 98304)$, which is disjoint from both the training bytes and the evaluation slice.

That inventory is applied out of sample, on the evaluation slice. There the trained vocabulary supplies 209.4 anchors
per KiB against its ceiling of 242.1, which is $86.5$ percent of it, and the external one 608.6 anchors per KiB against
651.2, or $93.5$ percent. On this one adjacent-slice transfer, the frozen inventory costs $13.5$ and $6.5$ percent of
the aggregate supply. The external vocabulary loses the smaller share of aggregate supply, but its longest anchor-free
stretch grows from 10 bytes to 35, while the trained vocabulary's is unchanged at 60 bytes.

The transfer is also reported per block. For each of the sixteen kibibyte blocks of the evaluation slice it is the share
of the block's ceiling supply that the frozen inventory keeps, and Table~\ref{tab:frozen} gives the range and upper
median of those shares. These are held-out figures from a single calibration split, so a deployment that needs a
worst-case bound should use the grammar-global anchor-free span when it is finite.

\begin{table}[!htbp]
\centering
% Generated by frozen_inventory.py; edit the program, never this file.
\begin{tabular}{@{}lrrr@{}}
\toprule
 & \multicolumn{3}{c}{share of the block's ceiling supply retained, percent} \\
\cmidrule(lr){2-4}
Vocabulary & lowest block & upper median & highest block \\
\midrule
local-384 & 71.0 & 88.5 & 100.0 \\
GPT2-prefix-4k & 85.0 & 94.0 & 100.0 \\
\bottomrule
\end{tabular}

\caption{Share of each evaluation block's ceiling supply retained by the frozen hybrid inventory, emitted by
\code{frozen\_inventory.py}. The inventory's windows of widths two to four were calibrated on the disjoint slice, and
the upper median over the sixteen blocks follows the zero-based percentile rule stated in the text.}
\label{tab:frozen}
\end{table}

The exactness matters because a sampled inventory of the same slice misses many anchors, especially for local-384. The
candidate set is sixty frequent slice n-grams, the twenty most frequent of each width. Every occurrence of every n-gram
of that width is counted, the final one included. Distinct n-grams are ranked by that count, and frequency ties are
broken by first occurrence in the slice. The rule has to be stated because the rank-20 selection boundary is tied. At
widths two, three and four, five candidates share the twentieth frequency and tie for the last two, one and four places
respectively, so ``the twenty most frequent'' names a family of candidate sets and not one. The trained vocabulary, over
all 250 selections consistent with those ties, certifies exactly one pair whichever is chosen while the external one
certifies between 35 and 38, and the stated tie rule selects 38.

Across all 250 selections the anchors per KiB the selected pairs yield are invariant, whether the pairs are counted
alone or together with the certified bytes, each position counted once. Within this family of selections, then, the tie
rule moves only the external vocabulary's certified-pair count, and the two per-KiB figures below do not depend on it.

Those sixty candidates yield a single certified pair for the trained vocabulary and 5.6 anchors per KiB of their own,
against the 189.9 the exhaustive inventory's wider component adds beyond the certified bytes on the same slice. On the
external GPT-2 vocabulary the same sixty candidates do much better, yielding 38 certified pairs and 370.9 anchors per
KiB. For each vocabulary the sampled inventory keeps the exhaustive one's exact certified bytes, and its anchors are the
union of theirs with the sampled-pair anchors, each position counted once. The trained vocabulary's sampled inventory
comes to 57.8 anchors per KiB against the exhaustive 242.1, 23.9 percent of it, and almost all of that is the 52.2 the
certified bytes supply on their own. The external vocabulary's comes to 427.4 against 651.2, 65.6 percent of it; its
certified bytes supply 379.8 alone and its sampled pairs 370.9, the two sets of positions overlapping heavily. The
emitted statistics label this union ``sampled wider half added to the exact singleton half''.

The pattern fits the token-interior reading above, with the same limit. A failed certificate for an n-gram means that,
in some completely tokenizable context, the token covering the n-gram's final byte starts somewhere other than the
proposed origin. It does not establish that the whole n-gram lies inside a token of the trained vocabulary. Nor do the
two vocabularies differ in alphabet coverage, since both carry all 256 fallback bytes, so every byte string is
tokenizable under either. They differ in the token inventory above that base, built by merges of different provenance,
and a certification verdict depends on that inventory alone.

Incremental damage is measured over 200 random byte substitutions per vocabulary. The draws come from seed 20260826,
reset per vocabulary, with positions uniform over the slice and replacement bytes uniform over the 256-byte alphabet. A
draw that repeats the original byte or leaves the input untokenizable is redrawn. Of 201 attempted draws 200 were
accepted; the one rejection repeated the original byte, and none was rejected for failing to tokenize, since the
fallback base makes every byte string tokenizable.

In every trial an anchor is eligible when at least one of its witnessing occurrences lies wholly in the unchanged
suffix, and the right bounding anchor $q$ is the least eligible anchor the declared $H \le 4$ inventory supplies. It is
not the least such anchor over all widths, because a certificate wider than the budget can witness an earlier eligible
anchor, and a probe one width past the budget finds one in $2$ of the $200$ trials on the trained vocabulary and $1$ on
the external one. That one-width probe is incomplete and does not determine the global least, though every qualifying
anchor it finds is a valid, possibly nonminimal, confinement bound.

For a fixed trial the global least is computable, by testing every substring of the unchanged suffix at every origin,
since every witnessing occurrence lies wholly in that suffix, and this paper does not run that search. The budget makes
the slack inventory-relative, so the slack reported is an upper bound on the global slack.

An accepted edit either moves some boundary or moves none. For each vocabulary, Table~\ref{tab:edit} reports how often
an accepted substitution moves nothing at all, the damage conditional on it moving something, the unconditional damage
with hull zero assigned to the trials that moved nothing, and the number of boundaries moved. Whenever a boundary moves,
the hull, the inclusive span from the first moved boundary to the last, lies inside Theorem~\ref{thm:edit}'s interval.
Means in the table are printed to three decimals; the unconditional means, whose denominator is 200, are exact at that
precision.

The two vocabularies are paired within each of the 200 trials. A trial's admissible width is the number of integer
positions strictly between $\max(0,\, p - L + 1)$ and $q$, the positions that the length-bounded interval of
Theorem~\ref{thm:edit} with $a = 0$ leaves open to a moved boundary. The paired difference, trained minus external, has
median 3 bytes and ninetieth percentile 13 on the hull, and 11 and 25 on the admissible width. A positive median does
not make the trained hull larger in every trial, since the hull difference is positive in 144 of the 200 trials, zero in
38, and negative in 18.

Whether an edit moves anything at all is paired the same way. The same substitution moves a boundary under both
vocabularies in 106 of the 200 trials, under the trained one alone in 63, under the external one alone in 12, and under
neither in 19.

\begin{table}[!htbp]
\centering
\begin{tabular}{@{}lrr@{}}
\toprule
200 accepted substitutions per vocabulary & local-384 & GPT2-prefix-4k \\
\midrule
trials moving no boundary & 31 & 82 \\
trials moving a boundary & 169 & 118 \\
\midrule
conditional hull median & 6 & 1 \\
conditional hull ninetieth percentile & 13 & 2 \\
conditional hull maximum & 20 & 7 \\
conditional hull mean & 6.852 & 1.508 \\
\midrule
unconditional hull median & 4 & 1 \\
unconditional hull ninetieth percentile & 13 & 2 \\
unconditional hull maximum & 20 & 7 \\
unconditional hull mean & 5.790 & 0.890 \\
\midrule
boundaries moved, mean conditional on a change & 3.882 & 1.415 \\
boundaries moved, mean per accepted trial & 3.280 & 0.835 \\
\bottomrule
\end{tabular}
\caption{Incremental damage from one edit seed. The program \code{splitting\_measurements.py} emits every figure into
\code{data/splitting-stats.txt}. The hull is the inclusive span, in bytes, from the first moved boundary to the last. A
trial that moves nothing has no such span and is recorded with hull zero, which is a convention and not a measurement.
The unconditional median, ninetieth percentile, maximum and mean include those trials at hull zero.}
\label{tab:edit}
\end{table}

How much room the interval leaves beyond the disturbed span is measured by two statistics, which answer different
questions and coincide only when the moved boundaries fill their own hull. Both start from the admissible width. The
\emph{hull slack} subtracts the hull's inclusive width and measures how much wider the theorem's containment interval is
than the span the edit actually disturbed. The \emph{unoccupied count} subtracts instead the number of boundaries that
moved and counts admissible positions no moved boundary sits on.

Both are inventory-relative, so both are reported twice, once against the evaluation-tailored ceiling inventory of the
rows above and once against the deployable frozen inventory of the transfer above. The frozen inventory's anchors on the
evaluation slice are a subset of the ceiling's, so its admissible width in a trial is never smaller.
Table~\ref{tab:slack} gives both statistics under both inventories, and any agreement between them comes from these
trials and not from the definitions.

\begin{table}[!htbp]
\centering
% Generated by splitting_measurements.py; edit the program, never this file.
\begin{tabular}{@{}llrrrr@{}}
\toprule
 & & \multicolumn{2}{c}{ceiling inventory} & \multicolumn{2}{c}{frozen inventory} \\
\cmidrule(lr){3-4}\cmidrule(lr){5-6}
Vocabulary & Statistic & p50 & p90 & p50 & p90 \\
\midrule
local-384 & hull slack & 40 & 50 & 40 & 50 \\
local-384 & unoccupied count & 40 & 54 & 41 & 54 \\
GPT2-prefix-4k & hull slack & 31 & 34 & 31 & 36 \\
GPT2-prefix-4k & unoccupied count & 31 & 35 & 31 & 36 \\
\bottomrule
\end{tabular}

\caption{Median (p50) and ninetieth percentile (p90) of the hull slack and the unoccupied count under both inventories,
in admissible boundary positions. Each measures how much of Theorem~\ref{thm:edit}'s containment interval an edit left
unused, and \code{splitting\_measurements.py} emits both. The ceiling inventory is computed exhaustively on the
evaluation slice, and the frozen hybrid inventory combines all certified bytes with the certified windows of widths two
to four calibrated on the disjoint adjacent slice and applied out of sample.}
\label{tab:slack}
\end{table}

The synchronization distance of Corollary~\ref{cor:gap} is unbounded for both subword vocabularies, since each admits a
cycle of non-certified bytes in its pending automaton, while the designed UTF-8 shape has distance three. The delimiter
criteria of Theorem~\ref{thm:delimiter} refuse the newline byte for both vocabularies in both directions, since in each
vocabulary newline occurs in some token at a noninitial position and in some token at a nonfinal one, so cutting at or
after every newline is unsound for either vocabulary. Two depths of the local family also segment differently, since the
48-merge and 384-merge local vocabularies agreed on none of 64 held-out slices.

% Every number in this passage is emitted by campaign_inventory.py into
% data/campaign-splitting-stats.txt, the table input directly from the fragment that program
% writes, and each grammar abstraction is stated beside the program with its class
% abstraction checked by execution against a byte-level scan of its own archived corpus. The
% occurring-or-vacuous split of each row's certificates is decided by occurring_windows.py, the
% violation search without its origin bit, validated against bounded enumeration and by its own
% witnesses; campaign_inventory.py rescans every occurring witness, checks that no vacuous window
% occurs in its row's corpus, and emits the split per row and over the four rows.
The instruments are also applied to the four C-like grammars of the recovery companion's measurement campaign, each on
its whole archived corpus. Each grammar is represented by byte classes stated beside the program that measures it. The
program checks by execution that the byte-level greedy scan and the scan of the abstracted classes have the same token
starts throughout the grammar's archived half-mebibyte corpus, including all 149{,}842 token starts of the conventional
row. Every row's anchor-free span is unbounded, which rules out a grammar-global gap guarantee, so the supply figures
below are corpus measurements. On every row the program cuts the whole archived corpus at every certified anchor, and
the chunked scans are byte-identical to the sequential stream. Table~\ref{tab:campaign} gives the four rows.

\code{campaign\_inventory.py} also checks Theorem~\ref{thm:edit} on edits to the archived corpora. On each of the
conventional and split-friendly corpora, sixty cross-class single-byte substitutions are applied to a slice ending at a
line boundary. They are retained from eighty-eight draws per row, the twenty-eight whose edited slice no longer
tokenized being redrawn. On each attempted draw the position is uniform on the slice and the target class uniform among
the other classes, with one fixed representative byte per class. Every disturbed boundary moves strictly between the
flanking shared-context anchors, with a boundary-difference hull median of one byte and an observed maximum of two under
the declared seed. Twelve same-class substitutions per row move no boundary at all, as the class abstraction predicts.
They were retained from fourteen position draws, redrawing the two whose byte was already its class's sole
representative.

\begin{table}[!htbp]
\centering
% Generated by campaign_inventory.py; edit the program, never this file.
\begin{tabular}{@{}lrrrrr@{}}
\toprule
Campaign row & anchors/KiB & gap p50 & gap p90 & gap max & snap max (8) \\
\midrule
conventional & 25.7 & 43 & 69 & 104 & 15 \\
split-friendly & 41.0 & 21 & 64 & 104 & 15 \\
block-comment & 6.7 & 121 & 304 & 965 & 96 \\
bare & 324.4 & 2 & 6 & 10 & 3 \\
\bottomrule
\end{tabular}

\caption{Certified-anchor supply of the recovery campaign's four C-like grammars on their archived corpora, emitted by
\code{campaign\_inventory.py}. Gaps and snap distances are in bytes. Gap statistics here include the endpoints, the run
from position zero to the first anchor and the run from the last anchor to the end of the corpus, so they are not
comparable row for row with Table~\ref{tab:supply}, whose gaps exclude both.}
\label{tab:campaign}
\end{table}
\FloatBarrier

The four rows show how grammar design affects anchor supply, and one pair of them is a controlled comparison. The
conventional and split-friendly rows differ in the grammar alone over the same bytes, while the block-comment and bare
rows are case studies of their own grammars and corpora.

The program's class letters are $L$ a letter or underscore, $D$ a digit, $S$ a space or tab, $N$ the newline, $P$ a
punctuator, $Q$ the quote, $C$ the slash, $A$ the star, $O$ any other operator byte, and $X$ every byte in none of those
sets. Each row distinguishes only the classes its tokens tell apart and folds the rest into $O$ or $X$. A class pair
below is a (window, origin) pair over a row's class alphabet.

The conventional row has identifiers, numbers, operators, punctuation, whitespace, string literals, and line comments,
and holds 15 certified class pairs at windows up to length two, none of them a single class at origin zero. It supplies
25.7 anchors per KiB with gap percentiles 43 and 69 and maximum 104. With 13{,}171 anchors cutting the archive, its
eight-way splits snap within 15 bytes and sixty-four-way within 38. The split-friendly row runs on the same corpus bytes
with the newline as a separate token. The newline class itself certifies at origin zero, the supply rises to 41.0
anchors per KiB and the median gap falls from 43 to 21.

The block-comment row supplies the fewest anchors. No window of two classes certifies at all, and of the counterexamples
the decision procedure synthesizes for those refutations, the longest is a six-byte whole block comment. A closing
delimiter fixes a boundary only with one preceding class symbol, since in \code{/*/} the opener's star followed by an
interior slash has the same two classes as a close. Of the 81 close-shaped four-class windows, 65 certify at origin
three, 57 of them with a window that occurs in some completely tokenizable class string and 8 vacuously, as explained
below. The supply is 6.7 anchors per KiB with a maximum gap of 965 bytes. At the opposite extreme, the bare row folds
all operators into one certified class and all punctuators into another, supplying 324.4 anchors per KiB with a maximum
gap of 10 bytes.

\paragraph{Occurring and vacuous certificates.} Corollary~\ref{cor:existence} separates two readings of a certificate,
and the vocabulary and campaign inventories of this section report different ones. The vocabulary candidate inventories
enumerate windows occurring in the slice, so every certificate of Table~\ref{tab:supply} occurs. The campaign
inventories enumerate the whole declared class alphabet, so their counts are unrestricted, and a pair whose window
occurs in no completely tokenizable class string is certified vacuously and anchors nothing.
\code{occurring\_windows.py} decides, for every certified pair, whether its window occurs in some completely tokenizable
class string, and \code{campaign\_inventory.py} reports the split. Of the 156 certified pairs over the four rows, 117
occur and 39 are vacuous. The block-comment row contributes the 65 pairs of its selected close-shaped width-four
extension, since its inventory through width two is empty, and the other three rows contribute their inventories through
width two.

On the conventional row 13 of the 15 pairs occur, and on the split-friendly row 17 of the 19. In both rows the two
vacuous pairs are the window $NX$ at both origins, since $X$ occurs only inside a string literal or a line comment,
neither of which contains a newline, and no token begins with $X$.

On the block-comment row 57 of the 65 occur, and the eight vacuous ones are $xACX$ at origin three for every class $x$
but the slash. They are vacuous because $X$ occurs only inside a block comment, a star followed by a slash past the
opener closes one, and no token begins with $X$. When $x$ is the slash class $C$, the window $CACX$ occurs as an opener
followed by an interior slash, and the pair is refuted.

On the bare row 30 of the 57 occur, and the 27 vacuous ones are the pairs whose window uses $X$, a class of the row's
alphabet that no token of the row contains. This distinction changes none of the anchor-supply, gap, snapping,
edit-bound or split-equality results. A window that occurs in no tokenizable input generates no corpus anchor, and the
program checks that none of the windows of these 39 vacuous pairs occurs in its row's corpus.

\paragraph{Scope of the measurements.} The vocabulary studies use one corpus and one 16~KiB slice of it, with an
adjacent slice for the frozen transfer. They read two primary vocabularies under maximal munch, their native merge ranks
set aside, together with a four-depth local series over the same corpus and slice. The main inventories use a window
budget of $H \le 4$. The edit studies each use one fixed seed, 20260826 for the vocabulary trials and 20260827 for the
two campaign edit rows. The campaign rows run on generated corpora, and the timings come from one machine and are
illustrative. The theorems hold for every token set and input satisfying their stated hypotheses. The figures in this
section are what those theorems produced on that material, and nothing here generalizes them to another corpus,
vocabulary, or budget.

\FloatBarrier
\section{The artifact: the theorem as a scanner}\label{sec:artifact}

% machine-check: run-artifact.sh over the munch probe tools/probes/src/parallel_scan.cpp, pinned in
% that script by repository, commit, and source digest; every number here is read from
% data/artifact-stats.txt, timing lines recorded rather than enforced.
The artifact is a threaded scanner probe, \code{parallel\_scan.cpp} in the \code{munch} library, that cuts a corpus at
certified anchors, scans the chunks in parallel, joins the results, and asserts that the boundary stream equals the
sequential one. Its anchor inventory is derived from the campaign grammar alone and computed before any input is read.
The probe is pinned by repository, commit and source digest, as Appendix~\ref{app:provenance} records. Timing lines
depend on the machine, so they are recorded and never enforced, while every other line is stable and compared on
regeneration.

On the archived half-mebibyte corpus, with 149{,}842 token starts, the snap maxima are 15, 26, 28, and 38 bytes at 8,
16, 32, and 64 workers, and the chunked stream is byte-identical in every configuration. On sixteen concatenated copies,
8{,}388{,}608 bytes and 2{,}397{,}472 token starts with 210{,}751 anchors, the snap maxima are 0, 0, and 6 bytes at 8,
16, and 32 workers, and the equality holds again. That larger input repeats the same corpus sixteen times, so it tests
scale without adding workload variety. The two zero snaps are a fact about its layout and say nothing about snapping at
scale, since the fifteen copy joins are anchors, $210{,}751 = 16 \times 13{,}171 + 15$, and the eight- and
sixteen-worker grids fall on them.

Each timing is the best of five runs on one machine, and the environment file the runner writes also records five
whole-probe repetitions of the half-mebibyte configuration. The sixteen-copy figures have no such repetitions. On the
sixteen-copy input, the recorded sequential scan takes 35.06 milliseconds and the 32-worker split 5.70, a speedup of
6.16, while the half-mebibyte configuration reaches 3.06 at 8 workers over a two-millisecond scan, and its best speedup
across the five repetitions runs from 3.16 to 3.52, at 16 workers in three of them and at 8 in two. This pattern is
consistent with a fixed parallel overhead mattering more as the input shrinks, though nothing here isolates that
overhead.

The timings cover scanning and, for parallel runs, thread creation and joining, but they exclude anchor discovery, cut
selection and output concatenation. Sequential runs build a fresh boundary vector each repetition, while parallel runs
reuse each chunk's vector capacity across repetitions. The theorem guarantees the equality of the boundary streams,
while the snap distances are measured and depend on the input, the inventory, and the worker grid, and the timings also
depend on the machine. Appendix~\ref{app:provenance} records how to regenerate every emission and check the scanner's
provenance.

\section{Related work}\label{sec:related}

\paragraph{Code synchronization and resolving blocks.} The certificate has a classical special case, and the window
paper places it in that lineage \citep{nidhogg2026splitwindows}. Here we use the restricted convention of
\citet[p.~17-01]{lassez1976messages}, in which a synchronizing pair of a code $X$ is a pair $(x, y)$ of words of $X^{*}$
such that $uxyv \in X^{*}$ forces $ux \in X^{*}$ and $yv \in X^{*}$ for all $u$ and $v$, so an occurrence of $xy$ in a
word of $X^{*}$ fixes the cut between its halves in every context. \citet[p.~373]{berstel2010codes} state the general
convention, in which $x$ and $y$ may be any words over the alphabet. A code has finite synchronization delay when some
number of codewords on each side makes every such pair synchronizing \citep{restivo1975synchronization}. For a regular
code that bounded delay is decidable, through the syntactic monoid of $X^{*}$
\citep[Th\'eor\`eme~2]{lassez1976messages}, by a finite-state argument of the kind the products of
Section~\ref{sec:deciding} make for token sets that need not be codes. \citet{fici2025morphisms} study morphisms
synchronizing with finite delay, a related bounded-window recovery notion for the circular factors of morphic images.
For prefix codes the synchronizing words are themselves an algorithmic subject, with shortest ones for Huffman codes
\citep{biskup2009shortest} and for prefix codes and their decoders \citep{ryzhikov2018short}.

Over a prefix code read as a token set, the completely tokenizable inputs are the words of $X^{*}$, and the scan
computes their unique factorization. A window occurring in a word of $X^{*}$ is certified exactly when every occurrence
of it in every word of $X^{*}$ has a factorization boundary at the origin and its part from the origin on is a prefix of
a codeword. The first condition fixes the factorization cut, and the window's parts before and after it need not
themselves lie in $X^{*}$, as both halves of a pair in this restricted convention must. The second makes the codeword
starting at the cut cover the window's tail, since a codeword starting at the cut and ending inside the tail would be a
proper prefix of a codeword extending the tail, which prefix-freeness excludes.

In symbolic dynamics a resolving block is a block of a factor subshift all of whose admissible preimages agree at a
selected coordinate \citep{adler1983sliding, marcus1985sofic}. Certification resembles this, with the window as the
block and the boundary bits from its origin to its end as the values every preimage agrees on, although resolving blocks
are stated for factor maps of subshifts, without token priorities or a finite input.

The C-like grammars and the byte-pair vocabularies measured here differ from the code case in not being prefix codes.
Each C-like token set measured here contains identifiers that are proper prefixes of longer identifiers, so none is
prefix-free. A token set need not be a code at all, as $\{\code{a}, \code{ab}, \code{b}\}$ shows, since \code{ab} has
two factorizations over it, and maximal munch selects one without relying on unique decipherability. For such sets the
certificate is defined through the scan and decided by the constructions of Section~\ref{sec:deciding}, the literal
cutoff and the armed run, rather than on a code's automaton.

On the code side, bounded synchronization delay is decidable for a regular code, as recalled above, and whether a
strongly connected partial automaton, the decoder of a prefix code among them, is synchronizing is checked in polynomial
time \citep[Section~2]{ryzhikov2018short}. On the transducer side, Karttunen's construction below makes the certificate
of a proposed window decidable by standard closure. Beyond those, no prior work was found that poses, for prioritized
token languages under maximal munch, whether a window's every occurrence fixes the origin of the covering token, states
the existence decision of Corollary~\ref{cor:existence}, or gives the constructions of Section~\ref{sec:deciding}.

\paragraph{Computing the scan.} Maximal munch is the standard convention, and its algorithmics have an established
literature. \citet{reps1998maximal} shows that a scanner can perform it in time linear in the input. He contrasts it
with the textbook backtracking scanner, which is quadratic on token sets such as $\{a,\ a^{*}b\}$, whose scan may look
arbitrarily far past a final state before committing. \citet{li2025uniform} give algorithms for the uniform version of
the problem, in which both the grammar and the text are inputs, that run in time linear in the text for a fixed grammar.
For WordPiece's finite vocabulary, \citet{song2021wordpiece} scan in linear time by keeping the uncommitted suffix as a
node of the vocabulary trie. When the trie cannot continue, they emit that node's failure pops, the tokens precomputed
for a failed extension, and follow its failure link to the node representing the still-uncommitted suffix. The trie node
represents the pending tail of Lemma~\ref{lem:pending}, which WordPiece maintains during scanning and this paper
explores when deciding certification. Reps states two conventions this paper inherits, that a tie over the longest match
is broken by the earlier token definition and that no token language admits the empty string. That literature answers
how fast the scan runs, and the question here, which of its positions are safe to start from, is a separate one it does
not address.

\paragraph{Longest match as a finite-state relation.} \citet{karttunen1996directed} compiles left-to-right longest-match
replacement into finite-state transducers. He builds tokenizers with them by inserting an end-of-token mark after each
match, an encoding of segmentation as a marked string, which the markings of Theorem~\ref{thm:offline} take up in bit
form. \citet{kaplan1998tokenizing,kaplan2005tokenizing} tokenizes with a finite-state relation and emits output at pinch
points, positions of the text at hand where all tokenization paths reaching them can be extended in exactly the same
ways. Pinch points express convergence on one input's configurations, where a certificate asks that a fixed window fix
the covering token's origin in every context.

Two of Karttunen's observations bear directly on Section~\ref{sec:deciding}. He notes that without the left-to-right
longest-match constraint the tokenizing transducer would not produce deterministic output, since it must place an
end-of-token mark after a letter sequence exactly when the word is not part of some longer multiword token. He also
shows that the transducer for $a^{+}b$ is unambiguous yet cannot be sequentialized, because ``it is obviously impossible
for any finite-state device to accumulate an unbounded amount of delayed output''. The same unbounded-lookahead pattern
appears in Reps's quadratic example, and Section~\ref{sec:deciding} uses it in $T = \{\code{a}^{+}\code{b}, \code{a}\}$,
on which certification is decided offline while no finite-state scanner emitting boundaries irrevocably, left to right,
can exist.

Two consequences already follow from his construction. The correctly marked greedy streams of a regular token set form a
regular language, since a transducer marking maximal instances has a regular output range. Once that language is
effectively regular, deciding whether a given marking belongs to it is membership testing, so the existence of an
offline verifier is an inference from his work and is not claimed as a contribution here. The closure operations
Corollary~\ref{cor:existence} spells out then give the bare decidability of a proposed window's certificate and of
existence at unrestricted width, which is not claimed as a contribution here either.

Theorem~\ref{thm:offline} supplies an explicit armed-run automaton over marked bytes, which carries no token identities,
has size bounded in terms of the token set, and is built to be intersected with a window's badly covered occurrences.
That intersection turns a verifier into a certificate test, and it is why the paper states the construction and not only
the fact that some verifier exists. Karttunen constructs the transducer, and this paper contributes the
characterizations of the properties of the segmentations it produces, the existence decision with its shortest member,
and the constructions that return the counterexamples and differing inputs, not their decidability as such.

\paragraph{Equivalence of transducers, and what it decides.} The two predicates of Theorem~\ref{thm:diff} mirror the
decomposition \citet{veanes2012sft} state for symbolic finite transducers, where deciding $A \equiv B$ ``reduces to two
independent tasks'', domain equivalence and equality of outputs on the common domain. This decomposition is the general
reason a comparison of this shape is decidable, and it is stated of single-valued transducers over a decidable label
theory. The decomposition does not construct the objects being compared, and two maximal-munch policies are not given as
transducers. Karttunen's construction above represents each policy's segmentation map by a transducer, while this paper
uses Theorem~\ref{thm:offline}'s armed run, an acceptor of correctly marked scans that validates guessed marks and
confines them to the policy's own scan. The contribution here is the construction of those acceptors and the product
searches that return the differing inputs. Veanes et~al.'s own equivalence decision is described as nonconstructive,
with counterexample generation supplied by a separate procedure.

\paragraph{Incremental and checkpointed lexing.} Theorem~\ref{thm:edit} derives its edit bound from certificates
determined by the token set, whereas incremental lexers use state saved from earlier scans. Incremental lexing in the
style of \citet{wagner1997lexing} restarts from per-token scanner-state snapshots, exact for the text they were cached
against, in a versioned token stream with dynamically maintained lookahead dependencies. The confinement is then a
property of a recorded run and of the state carried across edits, and it is available wherever that state was kept. The
bound here requires no saved scanner state. A certified anchor whose witnessing occurrence lies in bytes the edit left
untouched is a boundary of both versions by its certificate alone, so the certificates and their occurrences surviving
around the edit determine the damage interval, even where no snapshot exists. Interior anchor endpoints are available
only where such occurrences supply anchors. Certified anchors and saved scanner states can be used together, and
Section~\ref{sec:measurements} measures how much of the theorem's interval a real edit leaves unused, from which a
reader can judge the practical value of the certificate-only bound.

The author's parser library hopper~\citep{hopper} implements the restart for range edits, beyond the single-byte
substitution Theorem~\ref{thm:edit} is stated for. Its \code{Document} type searches backward for a certified anchor
whose evidence the edit left untouched and rescans from it to the first shared boundary past the edit, rescanning the
whole document when that bounded search finds no certified anchor. The hopper repository records its own measurements,
which this paper's gate does not check.

\paragraph{Cutting at a boundary, and deciding where one is.} \citet{chassot2026ziplex} verify a linear-time
maximal-munch lexer under the same two conventions this paper uses, non-nullable rules and priority by rule order. Its
proofs rest on helper lemmas about properties of lexing, among them ``the fact that removing a suffix of the input does
not affect the lexing of the prefix, provided that the cut occurs at token boundaries''. Their lemma, iterated, implies
Lemma~\ref{lem:cut}, and Theorem~\ref{thm:split} applies that lemma at certified anchors. Their closest result is a
conservative predicate $\mathit{sep}(t_1, t_2)$ on already known adjacent tokens, and pairwise satisfaction makes the
printed token sequence re-lex unchanged. That predicate starts from concrete adjacent token texts, while a certificate
is decided from a raw window over every tokenizable context.

\citet{lester2016flow} check the same hypothesis at a known join when concatenating analyzed code strings. They require
every lexer state reachable after the first string either to have emitted its token or to emit the same token on every
character that may begin the second, so that lexing the concatenation is concatenating the lexings. That check is made
per concatenation site, over the strings a string analysis admits for the two operands. The ZipLex lemma assumes a
boundary of a concrete input, while the join check establishes a sufficient boundary condition at one analyzed site.
This paper supplies a certificate stronger than a boundary at the origin, computed from the token set before any input
exists, which makes the window's origin a boundary at every occurrence in every completely tokenizable context. Neither
$\mathit{sep}$ nor the join check is the certificate.

The weaker predicate asks whether every occurrence of a window in every completely tokenizable input forces a boundary
at its origin, without fixing the token covering the window's final byte. In the window paper's third
version~\citep{nidhogg2026splitwindows}, which adds the boundary profile, the predicate says that the origin's gap is
forced in that profile, and that version decides it through this paper's Theorem~\ref{thm:offline}. This paper does not
decide it, the wider class of occurrence-based sound-cut rules being out of scope, as Section~\ref{sec:setting} says.
The introduction and Section~\ref{sec:setting} separate the two predicates, with $(\code{b}, 0)$ over
$\{\code{ab}, \code{b}\}$ refuted by the input \code{ab}, whose position 1 lies inside a token, and $(\code{ab}, 0)$
over $\{\code{a}, \code{b}\}$ refuted although its origin is a boundary in every context.

\paragraph{Parallel scanning.} \citet{mytkowicz2014dataparallel} parallelize finite-state machines, tokenization among
their applications, by enumerating transitions from every possible start state on each chunk and afterwards selecting
the run that began in the correct one. The method is exact, and its cost is the enumeration, naively an overhead
proportional to the state count, which their convergence optimization reduces to the states still active. The artifact
of Section~\ref{sec:artifact} is exact another way, cutting where a theorem already fixes the start state, so it needs
neither the enumeration nor a reconciliation pass, but it can cut only where the grammar's certificates occur. The two
are complementary. Certified splitting is available wherever the chosen inventory supplies anchors, and
Section~\ref{sec:measurements} measures that supply on the material it studies.

Other parallel-lexing methods incur different costs. Speculation predicts an entry state, validates, and re-executes the
chunk when the prediction fails \citep{prabhu2010speculative}. In the patent literature a partition scanner cuts a
stream at likely token boundaries, scans the chunks in parallel, and discards the later chunks' results and restarts at
the actual boundary when a guess is refuted \citep{vonpraun2011speculative}. An XML preparser carries the runs of a
chunk from every possible entry state in one meta-DFA state, and then keeps the run that the preceding chunk's end state
selects \citep{pan2007metadfa, chiu2014metadfa}. Bounded alternative-state scans follow every run that a chunk's
possible entry states admit and reconcile at the join, with the syntax restricted where the alternatives would not stay
bounded \citep{barenghi2015parallel}. Prescanning requires a pass over the input \citep{li2021plex}, and
\citet{yang1996mealy} treats longest-match lexing with Mealy machines, with a data-parallel algorithm among the results.

\citet{shao2026lopt} claim exact equality with sequential tokenization for BPE and WordPiece tokenizers by another
route, overlapping the chunks and merging adjacent ones at overlap tokens that sit at the same character positions in
both. Their Theorem 3.1 states a sufficient condition, a position-matched span strictly longer than the vocabulary's
longest token, and the published algorithm restarts with a doubled chunk only when it finds no overlap token at all, a
check that does not enforce that length condition. Shao et al. merge the chunks at their overlap after scanning, while
the state enumeration of Mytkowicz et al. does its extra work during the scan. Certified splitting only has to locate an
anchor near each cut, a cost the artifact's timings exclude, and it can cut only where the token set supplies an anchor.

\citet{zhang2026toktier} use a related matched-overlap idea for text appended to a session, under a different acceptance
rule. They match fresh tokens against cached token records and accept a splice only through a long enough equal run that
contains a family-specific stable boundary. For their byte-level BPE families that boundary is a character-class
transition from a per-family synchronizing set, which they report as resetting the pre-tokenizer under every left
context. A failed check widens the window and finally re-tokenizes the request. Their appendix calls the condition
checked at the junction a splice certificate, proves that a splice satisfying it is lossless, and summarizes how each
tokenizer family satisfies it, with the full proofs in a separate document shipped with the artifact. Among the serving
systems surveyed, those per-family acceptance predicates are the closest analogs of a certificate among checks made at
run time. They are established for a fixed pipeline, whereas the inventories here are decided from the token set.

The vocabulary pair rule of Gigatoken is closer still, since it is fixed before any input. It certifies cuts before
selected punctuation in SentencePiece input where no vocabulary piece contains the two symbols at the cut in that order,
a sufficient condition argued in a source comment from the fact that a merge's result is always a vocabulary piece
\citep{rod2026gigatoken}. Read under maximal munch, such a pair is a certified window of width two with origin one, by
the reasoning of Lemma~\ref{lem:interior}. The inventory construction here decides certification for every window and
origin through the budget, for an arbitrary regular token set, with a counterexample on each refutation.

\paragraph{The neighboring quantity.} \citet{li2026streaming} statically compute, for a tokenizer, how far past a
token's end a scanner must read before its maximality is settled. Their analysis is finite-state, of the same shape as
Section~\ref{sec:deciding}'s. Among the sources surveyed it is the nearest tokenizer-specific quantity to certification
and the anchor-free span, but it measures confirmation from a start already known. Theorem~\ref{thm:cutoff} instead asks
whether an origin is forced at every occurrence in every context, and Theorem~\ref{thm:density} asks how long an input
can run with no anchor from a given finite inventory. One token set separates them, $T = \{\code{aa}\}$. Its completely
tokenizable inputs are the even-length runs of \code{a}, and no longer token competes with \code{aa}, so nothing has to
be read past a token's end and the confirmation distance they study is zero. Here no occurring window is certified,
because an occurrence of a window of \code{a}s beginning at position $p$ has its last byte covered by the token
beginning at the greatest even position at or below that byte, so the origin is a function of the parity of $p$, and
both parities occur once the input is long enough. No occurring window of any width has a universal origin, the
occurring inventory is empty, and the anchor-free span is unbounded. A bounded confirmation distance therefore does not
imply that any occurring window is certified, and it does not bound the anchor-free span.

\paragraph{Expected density under coverage, worst-case gaps without it.} Two lines of work outside compiler construction
select positions in a byte stream by a bounded window and ask how often the rule selects a position, which is this
paper's supply question asked of a designed rule. \citet{bjorner2010chunking} call a chunking method local when the
cutpoints of every file are exactly the positions whose $h$-vicinity, the $2h + 1$ entries around the position, lies in
a chosen criterion set. A finite certified inventory induces a cut rule of that bounded-context form on every position
whose $h$-vicinity lies inside the file, but the converse fails, since a local rule need not respect token boundaries.
Over $T = \{\code{aa}\}$ the rule selecting every position is local and cuts every token in half, where a certificate
demands the covering-origin condition at every occurrence in every context. Over files whose entries are drawn
independently and uniformly, Bj{\o}rner et~al. define the slack of two files that agree from a point onward as the
distance from that point to their first common cutpoint, analyze its expectation normalized by the expected chunk
length, and separately bound the probability that a long interval carries no cutpoint, for four chunking methods.

\citet{schleimer2003winnowing} define the density of a fingerprinting scheme as the expected fraction of positions
selected under a given input distribution. Assuming independent, uniformly distributed hashes, they prove winnowing's
density asymptotically $2/(w+1)$ for window size $w$, and a lower bound of $1.5/(w+1)$ for every local algorithm, one
that selects a position from each window of $w$ hashes by that window's contents alone. None of these results transfers
here. A certified inventory is under no obligation to select in every window, so the coverage the lower bound assumes
need not hold, and this paper asks for worst-case and corpus gaps, where those results concern expected density under an
independence model.

Two further lines of work study worst-case gaps for arbitrary finite sets of patterns, without tokenization constraints.
A finite set of words is unavoidable when all but finitely many words contain one of them as a factor, which is decided
by building the automaton of the words that avoid the set and looking for a cycle among its live states
\citep{rosaz1995unavoidable}. For a finite unavoidable set over a fixed alphabet, the maximum length of an avoiding word
is bounded in terms of the set's longest word. Rosaz credits this bound to \citet{schutzenberger1964synchronizing},
whose Remark~5 counts the fewest words of length $n$ an unavoidable set over $k$ letters needs.
\citet{crochemore1983unavoidable} showed the bound best possible. A universal hitting set in the sense of
\citet{orenstein2017hitting} is a set of $k$-mers one of which every $L$-long sequence contains, for $L > k$, which is a
bounded anchor-free span for a chosen inventory over all strings.

Theorem~\ref{thm:density} asks this question for a finite inventory of certified windows on completely tokenizable
inputs, with each occurrence marking the origin named by its certificate, and a position is marked once even if several
occurrences witness it. With these constraints the cycle is sought in the credit product of the armed-run verifier
instead of the automaton of the avoiding words. The token set constrains which certificates are admissible, the window
budget and the chosen inventory decide which are used, and Section~\ref{sec:measurements} measures the resulting supply
on named material.

\paragraph{Self-synchronization.} UTF-8's self-synchronization is listed among its characteristics in
\citet{yergeau2003utf8} as ``character boundaries are easily found from anywhere in an octet stream'', resting on the
disjointness of the lead and continuation ranges. The distance is not new either, since the same document bounds a
character at four octets, one lead and at most three continuations. Corollary~\ref{cor:gap} adds that the number is
computed from the token set by a procedure that uses nothing specific to UTF-8 and returns three for that shape. For
token sets with unbounded synchronization distance it returns an unbounded verdict, justified by a reachable cycle of
transitions on non-certified bytes in the live pending automaton.

\paragraph{Subword vocabularies.} The vocabularies measured in Section~\ref{sec:measurements} were produced by byte-pair
encoding in the sense of \citet{sennrich2016subword}, which learns an ordered list of merges and applies them in that
order. \citet{berglund2023formalizing} give that process a formal semantics, a dictionary of ordered rules rewriting a
whole string, and bound how far appending on the right can move the resulting tokenization. Further work gives byte-pair
encoding finite-state and incremental accounts. \citet{berglund2025bpedfa} construct automata operating on BPE
tokenizations together with an input-deterministic transducer relating strings to their tokenizations,
\citet{jiang2026incremental} maintain a BPE tokenization incrementally over every prefix, and
\citet{mamouras2026bpedelay} bound its streaming delay from a known beginning. \citet{cognetta2025tokenization} show
both BPE and MaxMatch representable by finite-state transducers, noting that BPE ``does not tokenize strings from left
to right and requires a notion of priority''. That priority is why Section~\ref{sec:measurements} reads its BPE-derived
vocabularies as maximal-munch token sets and measures that policy instead of BPE's own segmentation. The boundary
comparison emitted here is between depths of the locally trained family under that one reading. A study of agreement
between native merge-rank BPE and MaxMatch is not run, and its outcome is not assumed.

\section{Conclusion}

A certificate is a window and an origin such that, at every occurrence in every completely tokenizable input, the token
covering the window's last byte begins at that origin. Every certified position is a boundary, although not every
boundary is certified. The paper applies this guarantee to splitting, editing, delimiter rules and the design of
grammars and formats. The quantifier that makes recovery certificates trustworthy after damage also supports safe
division of work before it, and deciding certification at inventory scale, with counterexamples, gives these
applications effective constructions.

The limits of the results are stated where each holds. The results apply at the lexical layer, to greedy maximal-munch
scans of literal and regular token sets, and the campaign grammars enter through class abstractions checked by execution
against byte-level scans. The anchor-free span is unbounded for every campaign grammar, and so is the synchronization
distance of each subword vocabulary's certified-byte inventory, so their anchor supply is a corpus fact the instruments
measure, not a constant the theorems guarantee.

The artifact scans with precomputed cuts, no speculation and no reconciliation pass, and asserts the equality of
boundary streams on every run. Beyond this layer, segmentation feeds parsing and caching, and extending these guarantees
there calls for the same sequence of decisions, measurements and executable checks.

\section*{Tools}

Large language models assisted with drafting, with checking citations against primary records and with reviewing the
implementation. The author verified each suggestion by derivation, against primary sources, or against the
implementation, its tests and the archived artifacts, and is responsible for all content.

\appendix
\section{Reproduction and provenance}\label{app:provenance}

\paragraph{Artifact and data availability.} The script \code{sh scripts/check-paper4.sh} regenerates the deterministic
emissions this paper quotes and compares them byte for byte with the committed copies. With the pinned build supplied it
reruns the scanner as well, as described below, and otherwise it checks the prose against the committed scanner
statistics. The same gate rebuilds the manuscript and checks each measurement it names against the emission that
produced it, and it reports that measurements it does not name are not checked. It also checks the runner's environment
file against its schema and the row count required by the declared repetitions. The gate runs in minutes on an ordinary
machine, the wide cutoff sweep being the slowest stage.

The vocabulary measurements read two GPT-2 reference files that are not redistributed here. The gate names them, holds
them to their published digests, and stops with an error when they are absent. The evaluation fixture is not
redistributed either, and the shipped bundle's \code{verify-all.sh} fetches \code{twitter.json} from the pinned simdjson
release and holds it to its digest, while the campaign corpora ship, so the campaign emissions rerun with no network at
all.

Two stages are optional and named as such. Setting \code{CHECK\_PAPER4\_REDERIVE=1} rederives the evaluation,
depth-series and frozen inventories with every decision remade, without reading or writing any committed cache, which
takes hours and writes its own log outside the repository. Passing a built \code{munch} checkout to the gate reruns the
threaded scanner of Section~\ref{sec:artifact}, whose stable lines are compared and whose timing lines are only
recorded. That same stage also reruns the library's own recomputation, described below, and holds every line it
recomputes to the committed record. The emissions, caches and campaign corpora, the programs that derive them, the gate
and the runner are archived together as a standalone record at \zenodo{10.5281/zenodo.23708363}.

The threaded scanner is part of \code{munch}~\citep{munch}, a public C++ library at
\url{https://github.com/nnidhogg/munch} under the MIT license. Its archived record across versions carries one concept
identifier, \zenodo{10.5281/zenodo.21752996}, and the record of release 2.1.0 is \zenodo{10.5281/zenodo.23026012}. The
runner \code{run-artifact.sh} names the repository, the commit it pins, the commit the tag \texttt{v2.1.0} names, and
the sha256 of each probe source, and refuses to run against anything else, which binds the numbers to that commit. The
build directory must sit inside a checkout at that commit, and the probe sources, \code{parallel\_scan.cpp} among them,
must carry their pinned sha256. The commit binds the scanner, since the probe includes the grammar header and links the
library, and a source digest alone would leave most of what runs unaccounted for.

The two GPT-2 reference files are the byte-level GPT-2 tokenizer's vocabulary and merge table, published with that model
as \code{vocab.json} and \code{merges.txt} and mirrored in the model repository \code{openai-community/gpt2}. A reader
places them in a directory named \code{llm-vocabs} beside the bundle root, under the names below, and their digests
decide whether they are the bytes this paper read:

\begin{center}\small
\code{gpt2-vocab.json}\quad\code{196139668be63f3b5d6574427317ae82f612a97c5d1cdaf36ed2256dbf636783}\\*
\code{gpt2-merges.txt}\quad\code{1ce1664773c50f3e0cc8842619a93edc4624525b728b188a9e0be33b7726adc5}
\end{center}

\noindent The gate holds both to those digests before any measurement runs. The measurements read only the merge table,
while the vocabulary file's recorded digest identifies the GPT-2 release that table belongs to.

\paragraph{The decisions in the library.} In release 2.1.0 of \code{munch}, the library's own tests compare three
decisions and one measurement with the exploration programs that model them. The decisions are declared beside the byte
and window certificates and forwarded by its lexer. \code{window\_occurrence()} decides whether a window occurs in some
completely tokenizable input, with the shortest such input. \code{window\_counterexample()} decides whether some
completely tokenizable input violates a window certificate, with the shortest such input, so that an exhausted search
proves the certificate exact. \code{segmentation\_difference()} decides whether two token sets define the same
domain-to-boundary map, with a shortest differing input and whether the difference lies in their domains or their
boundaries. The measurement is the audit tool's supply, read from its report as the anchors the certificates place in a
given input, with their count per KiB and their gaps. At the commit pinned by \code{run-artifact.sh}, the tests compare
each decision's verdict and the length of the shortest input it returns with the exploration that models it. They check
the window decisions on the exploration's own enumerated universes, and the segmentation difference on the 225 literal
pairs of the census in \code{segmentation\_equivalence.py} and the 64 pairs of the eight sets of the full-equivalence
sweep in \code{finitely\_verifiable.py}. The 25 regex pairs and the cross-representation pair are checked by the
exploration programs alone. The supply is checked against the exploration's anchors, supply and gap figures on the
vocabulary and corpus the two share.

The library also carries a probe, \code{paper4\_recompute.cpp}, pinned by \code{run-artifact.sh} beside the scanner,
that rederives this paper's emissions from those decisions in the emissions' own line format. In it the token sets are
rebuilt as the programs build them and digested, the byte certificate is the lexer's split-point test, the window
certificate is \code{window\_counterexample()} exhausted with no counterexample, the supply is the audit report's, the
gap corollary is the anchor-free span, the campaign differential is the boundary difference, and the edit trials replay
the programs' seeded draws through a port of their generator.

The campaign measurements differ from the library's in two respects, their gap convention and the inventory behind the
boundedness verdict. The campaign's gap percentiles include the distances from the input's ends to the nearest anchors,
which the library's supply report omits, and a separate emission line records the library's convention. The campaign
program decides boundedness over the exact class-window inventory, while the probe applies \code{anchor\_free\_span()}
to the byte expansions the conservative window model admits, a subset of that inventory. The probe's unbounded verdict
is consistent with the program's but cannot establish it, since a span unbounded over fewer anchors can become bounded
when anchors are added. On the four campaign rows both are unbounded.

Of the lines the probe recomputes across the eleven emissions, 224 are the committed lines byte for byte. Of the rest,
seven agree up to a remainder the probe does not write, which is either an inventory key that digests the Python sources
or the prose after a verdict. Two agree in their figure under the library's name for the decision, the wide sweep's
refuted and tight counts written with \emph{counterexample} where the emission says \emph{violation}. Four more agree
under the library's spelling of a label. The spellings differ in the unit of the frozen table's header and of the frozen
study's two lines giving the supply as a share of the ceiling, and in the name of the campaign table's block-comment
row. No recomputed line disagrees with its committed line once these remainders, the decision name and the label
spellings are allowed for. The comparison lists the probe's own diagnostics on the campaign and vocabulary runs
separately and excludes them from these counts, since they have no committed counterparts.

The library's exact window decider reproduces the offline verifier's certified-pair counts and the anchors they supply
on the evaluation, calibration, depth-series and campaign universes, and the comparison covers every width of the budget
on the evaluation slice and every certified pair individually on the conventional campaign row. The eighteen lines the
probe does not recompute are prose, the attribution comments, the line naming the trial records with their digest, and
seven lines of the wide sweep. Those seven report five counts from the exploration's checks of the cutoff proof's two
truncations. These checks enumerate to a declared margin above each cutoff and are not implemented in the library. The
exploration programs generated the committed emissions and remain the independent reference the library is tested
against, while the library's recomputation checks those emissions in return.

The library audit report's window row counts the windows its conservative window model certifies. With accepted proper
token prefixes that model can reject windows the offline verifier certifies, so on a byte-fallback vocabulary the row
gives a lower bound on the exact inventory of Table~\ref{tab:supply}, while the certified-byte counts agree exactly.

\setlength{\bibsep}{1pt}
\bibliographystyle{plainnat}
\bibliography{refs}

\end{document}